\renewcommand{\thefootnote}{\fnsymbol{footnote}}
\date{}
\DeclareMathAlphabet{\pazocal}{OMS}{zplm}{m}{n}
\newtheorem{theorem}{Theorem}[section]
\newtheorem{definition}[theorem]{Definition}
\newtheorem{fact}[theorem]{Fact}
\newtheorem{lemma}[theorem]{Lemma}
\newtheorem{claim}[theorem]{Claim}
\newcommand{\E}{\mathbb{E}}
\newcommand{\abs}[1]{\left| #1 \right|}
\newcommand{\vabs}[1]{\left\| #1 \right\|}
\newcommand{\abra}[1]{\left\langle #1 \right\rangle}
\newcommand{\pbra}[1]{\left( #1 \right)}
\newcommand{\sbra}[1]{\left[ #1 \right]}
\renewcommand{\mid}{\,\middle\vert\,}
\newcommand{\SU}{\mathsf{SU}}
\newcommand{\Haar}{\pazocal{H}}
\def\01{\{-1,1\}}
\DeclareMathOperator{\Tr}{Tr}
\newcommand{\indi}{\mathbbm{1}}
\newcommand{\Rtwo}{\textsf{R2}}
\newcommand{\SMP}{\textsf{SMP}}
\newcommand{\ABCD}{\textsf{ABCD}}
\newcommand{\id}{\mathbb{I}}
\newcommand{\Cbb}{\mathbb{C}}
\newcommand{\Nbb}{\mathbb{N}}
\newcommand{\Rbb}{\mathbb{R}}
\newcommand{\Lcal}{\mathcal{L}}
\newcommand{\Pcal}{\mathcal{P}}
\newcommand{\Rcal}{\mathcal{R}}
\newcommand{\Gcal}{\mathcal{G}}
\newcommand{\mud}{\mu_{\mathsf{diag}}}
\newcommand{\DQC}{\mathsf{DQC}_1}
\newcommand{\BPP}{\mathsf{BPP}}
\author{Srinivasan Arunachalam\thanks{	IBM Quantum, Almaden Research Center. Email: \href{Srinivasan.Arunachalam@ibm.com}{srinivasan.arunachalam@ibm.com}}
	\and Uma Girish\thanks{  Princeton University. Email:  \href{mailto:ugirish@cs.princeton.edu}{ugirish@cs.princeton.edu} }
	\and Noam Lifshitz\thanks{	Hebrew University of Jerusalem. Email: \href{noamlifshitz@gmail.com}{noamlifshitz@gmail.com}}}
\newcommand{\srini}[1]{{\textcolor{blue}{Srini}}}
\begin{document}
	\title{One Clean Qubit Suffices for Quantum  Communication~Advantage}
	\maketitle
	
	\begin{abstract}  We study the one-clean-qubit model of quantum communication where one qubit is in a pure state and all other qubits are maximally mixed. We demonstrate a partial function that has a quantum protocol of cost $O(\log N)$ in this model, however, every interactive randomized protocol has cost $\Omega(\sqrt{N})$, settling a conjecture of Klauck and Lim. In contrast, all prior quantum versus classical communication separations required at least $\Omega(\log N)$ clean qubits. The function demonstrating our separation also has an  efficient protocol in the quantum-simultaneous-with-entanglement model of cost $O(\log N)$. We thus recover the state-of-the-art separations between quantum and classical communication complexity. Our proof is based on a recent hypercontractivity inequality introduced by Ellis, Kindler, Lifshitz, and Minzer, in conjunction with tools from the representation theory of compact Lie groups.
	\end{abstract}

\section{Introduction}
\renewcommand*{\thefootnote}{\arabic{footnote}}

A central goal in complexity theory is to understand the power of different computational resources. In the past four decades, communication complexity has provided a successful toolbox to establish several results in theoretical computer science in circuit complexity~\cite{KarchmerW90,karchmer1995super}, streaming algorithms~\cite{kapralov2014streaming}, property testing~\cite{blais2012property}, extension complexity~\cite{fiorini2015exponential}, data structures~\cite{miltersen1995data}, proof complexity~\cite{huynh2012virtue}. 
In the standard two-player model of communication complexity introduced by Yao~\cite{yao1979some} there are two parties Alice and Bob whose goal is to compute a partial function $F:\mathcal{X}\times \mathcal{Y}\rightarrow \{-1,1,\star\}$. Alice receives $x\in \mathcal{X}$  and Bob receives $y\in \mathcal{Y}$  and their goal is to compute $F(x,y)$ for all $(x,y)\in F^{-1}(1)\cup F^{-1}(-1)$, while minimizing the number of bits of communication. One variant of this is when the players are allowed to send quantum messages. Quantum versus classical separations in communication complexity have a long and rich history. In a sequence of works~\cite{buhrman1998quantum,bar2008exponential,buhrman2001quantum,raz1999exponential,gavinsky2006bounded,gavinsky2007exponential,klartagregev,gavinsky2020bare,GRT22}, it has been shown that quantum communication can exponentially outperform classical communication.\footnote{Proving such separations for total functions is a major open question, however, we know several examples of partial functions such that quantum provides provable exponential speedups. Total functions are defined on all possible inputs, while partial functions are defined on a subset of inputs.}  The state-of-the-art result among these is due to~\cite{Gav16,GRT22}, who give a separation between quantum simultaneous communication complexity (where Alice and Bob share entanglement) and interactive randomized. This result subsumes most previous results and also shows that a rather weak and restricted model of quantum communication (simultaneous with entanglement) can exponentially outperform a rather strong classical model (interactive randomized). One of the benefits of proving quantum versus classical separations in communication complexity is that they are \emph{unconditional}. The motivation for our work is two-fold:

\textbf{Near-term implementations.} Given that we are finally in an era of small noisy quantum devices, there have been proposals to use these communication separations to show experimental demonstrations of quantum advantage. To this end, Kumar et al.~\cite{kumar2019experimental} experimentally \emph{demonstrated} a quantum communication advantage for the Hidden-matching problem defined by Bar-Yossef et al.~\cite{bar2008exponential}. More recently, Aaronson coined the term ``\emph{quantum information supremacy}" wherein the goal is to show a task is solvable using a quantum resources that is exponentially 
 more efficient than classical resources, with the benefit that this quantum advantage would be \emph{unconditional} unlike sampling-based proposals. Aaronson et al.~\cite{ABK23} again used the communication problem of~\cite{bar2008exponential} and showed a separation between quantum and classical complexity classes, and proposed an experimental implementation~\cite{aar23}. Inspired by these recent works, we ask
\begin{quote}
 \emph{What is the minimum quantum resource sufficient for a quantum communication speedup?}
 \end{quote}
\textbf{$\DQC$ versus $\BPP$.} Knill and Laflamme~\cite{KL98} introduced the  one-clean qubit model of quantum computation, also known as $\DQC$. 
In this model, there is one qubit in a pure state and all other qubits are maximally mixed. The motivation of this model is two-fold: $(i)$ The idea is to study the power of models of quantum computing in which the quantum memory is weak, but the control of this memory is good, in contrast  to studying quantum
computation, where the underlying memory is good, but the control is weak (such as Boson sampling) $(ii)$ The primary motivation of~\cite{KL98} was the NMR approach to quantum computing where the initial state may be highly mixed.  Despite how noisy these states are, $\DQC$ is powerful and provides exponential speedups compared to the best known classical models~\cite{SJ08,KL98,CM18}.
It was shown~\cite{MFJ14} that $\DQC$ is not efficiently classically simulable, unless the polynomial hierarchy collapses to the second level. While these results provide strong evidence that $\DQC$ can exponentially outperform classical computation, proving this unconditionally has been a long-standing open question in complexity theory (in particular, what is the relation between $\DQC$ and  $\BPP$).  All known hardness results rely on complexity theoretic assumptions.  When it comes to unconditional separations between quantum and classical, there are very few settings where quantum models provably exponentially outperform classical models. Communication complexity is a striking example of such a setting and a natural question~is 
\begin{quote}
\begin{center}
 \emph{Is $\DQC\subseteq \BPP$ in the communication world unconditionally?}
 \end{center}
 \end{quote}
In the $\DQC$ model of communication (first defined by Klauck and Lim~\cite{KL19}), Alice and Bob exchange quantum states such that the first qubit is in a pure state and all other qubits are maximally mixed. The players have no additional private memory, they simply take turns applying unitary operators on the state. (See~\Cref{sec:clean_qubit_model} for a formal definition.)  This is a rather restrictive model of quantum communication; all the aforementioned quantum versus classical separations require the quantum protocol to have at least $\Omega( \log N)$ clean qubits. They proposed a natural communication problem that is solvable using only one clean qubit. We call this the \ABCD~problem. Here, $A,B,C,D$ are $N\times N$ special unitary matrices, Alice gets as input $A,C$ explicitly and Bob gets as input $B,D$ explicitly and their goal is to decide if $\Tr(ABCD)\ge 0.9N$ or $\Tr(ABCD)\le 0.1N$ promised one of them is the case. 
 This problem was shown to have a protocol with $O(\log N)$ qubits in the one-clean-qubit model of quantum communication~\cite{KL19}.\footnote{\label{note1} Although they state their protocol for the $\mathsf{ABC}$ problem, the protocol trivially extends to the \ABCD~problem.} They also conjectured  that the interactive randomized communication complexity of this problem is $\Omega(\sqrt{N})$. The main contribution of this paper is to prove the conjecture of Klauck and Lim~\cite{KL19}. 
 
\paragraph*{Main Result.} 
We show that the $\ABCD$ problem can be computed with cost $O(\log N)$ with just \emph{one-clean qubit}, however, every interactive randomized protocol has cost $\Omega(\sqrt{N})$. This separates $\DQC$ and $\BPP$ unconditionally in the communication world. As far as we are aware, all prior quantum versus classical communication separations required at least $\Omega(\log N)$ clean qubits.

We also study the communication complexity of the \ABCD~problem in the quantum-simultaneous-with-entanglement model. In this model, Alice and Bob share entanglement, each apply a quantum operation on their part of the shared state and send everything to Charlie, who applies a projective measurement and announces the outcome as the answer. Interestingly, there is a protocol of cost $O(\log N)$ in this model for the \ABCD~problem. As a result, we show an exponential separation between quantum-simultaneous-with-entanglement and interactive randomized communication complexity and thus recover many of the best known separations, including~\cite{gavinsky2019quantum,GRT22}. Our quantum simultaneous protocol for the $\ABCD$~problem has the additional nice property that it is an entangled-\emph{fingerprinting} protocol, i.e., a type of simultaneous protocol where Charlie essentially just performs a swap test. We describe this in more detail now. 

Gavinsky et al.~\cite{GKW06} introduced the quantum \emph{fingerprinting} model in the $\SMP$ model: here Alice and Bob on input $x,y$ respectively, send $U_x\ket{0^n},V_y\ket{0^n}$ to Charlie who performs a swap test between $U_x\ket{0^n}$ and $V_y\ket{0^n}$. They repeat this process a few times before Charlie obtains the swap-test statistics and computes  $F$ on inputs $x,y$. Surprisingly, it was shown~\cite{GKW06} that this model is efficiently simulable in the classical randomized simultaneous model of communication, thereby showing that quantum states are no stronger than classical states for the fingerprinting model. We consider the \emph{entangled-fingerprinting} model where Alice and Bob share a few EPR pairs and on input $x,y$, apply $U_x,V_y$ on their part of the shared state and send it to Charlie, who still performs a swap test between Alice's $A$-register and Bob's $B$-register. (See~\Cref{sec:entangled_fingerprinting} for a formal definition of this model.) They repeat this in parallel with a fresh copy of $\ket{\psi}_{AB}$ and based on the swap test statistics, Charlie computes $F(x,y)$. 
A natural question is, \emph{are entangled fingerprints stronger than just randomized fingerprints, and if so how much stronger?} In this work, we show that in contrast to the standard fingerprinting model, the entangled-fingerprinting model can \emph{exponentially} outperform randomized fingerprinting and even outperform the strongest \emph{interactive} classical model of communication. 

\textbf{Techniques.}
Given the definition of the $\ABCD$ problem, compact Lie groups such as the special unitary group $\SU(N)$ arise naturally. Our work draws inspiration from the study of quasirandom groups. A group $G$ is said to be $D$-\emph{quasirandom} if every $D$-dimensional representation $\rho\colon G\to \mathrm{GL}_D(\mathbb{C})$ is trivial, i.e., constantly equal to the identity.
 Group quasirandomness plays a central role in number theory~\cite{sarnak1991bounds}, group theory~\cite{bourgain2008uniform} and combinatorics~\cite{gowers}. Gowers and Viola~\cite{gowers2015communication}  showcased how quasirandomness transcends its origins in pure mathematics, employing it as a pivotal tool in proving lower bounds for a variety of communication protocols over groups. For our purpose, it turns out that the quasirandomness of $\SU(N)$ alone is not sufficient. To overcome this, we instead introduce a set of new deep mathematical tools and concepts from the study of product free sets in $\SU(n)$ \cite{keevash2022largest, EKLM23} into communication complexity theory. 

Our proof uses representation theory, Fourier analysis and hypercontractivity on $\SU(N)$. Fourier analysis on the Boolean cube and level-$k$ inequalities have been important to prove  quantum vs.~classical separations~\cite{gavinsky2007exponential,GRT22,gavinsky2019quantum,montanaro2010new,doriguello2020exponential,ben2008hypercontractive,shi2012limits}. However, our problem is defined on $\SU(N)$ and it is unclear if the Fourier-analytic techniques on the  Boolean cube extend to the special unitary group $\SU(N)$. A recent breakthrough work~\cite{EKLM23} studied product-free sets in quasirandom compact Lie groups, and showed hypercontractive inequalities for $\SU(N)$. Although not immediate, their hypercontractive inequality is pivotal for the $\Rtwo$ lower bound. Our work appears to be the first application of this inequality to quantum computing and we believe that hypercontractivity on $\SU(N)$ will be of great interest to a broader quantum~audience. 

 \subsection{Main Theorem}

\begin{definition}[$\ABCD$ problem]
    Let $A,B,C,D$ be $N\times N$ special unitary matrices. Alice is given $A,C$ explicitly and Bob is given $B,D$ respectively. Their goal is to  output 1 if $\Tr(ABCD)\ge 0.9N$ and 0 if $\Tr(ABCD)\le 0.1N$, promised that one of these is true. 
\end{definition}
 Our main theorem is as follows.
\begin{theorem}\label{theorem:main_overall_theorem}
The $\ABCD$ problem has communication complexity
\begin{enumerate}
    \item\label{item1}   $O(\log N)$ in the one-clean qubit quantum model.
    \item\label{item2} $O(\log N)$ in the quantum-simultaneous-with-entanglement model.
    \item\label{item3}  $\Omega(\sqrt{N})$ in the interactive classical randomized  model.
\end{enumerate}
\end{theorem}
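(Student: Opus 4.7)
\textbf{Plan for parts~\ref{item1} and~\ref{item2} (quantum protocols).} Part~\ref{item1} follows by distributing the standard DQC1 trace-estimation subroutine across the two players. Initialize a single pure qubit $c$ in $\ket{+}$ together with a $\log N$-qubit register in the maximally mixed state $\id/N$. Alice and Bob alternately apply controlled-$A$, controlled-$B$, controlled-$C$, controlled-$D$ on the data register (controlled on $c$), exchanging the entire state after each step. A direct calculation shows that the reduced state of $c$ satisfies $\langle X\rangle_c=\Re(\Tr(ABCD))/N$, so $O(1)$ independent repetitions distinguish $\Tr(ABCD)\ge 0.9N$ from $\Tr(ABCD)\le 0.1N$, for a total of $O(\log N)$ qubits of communication. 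For part~\ref{item2}, the plan is an entangled-fingerprinting protocol: Alice and Bob share $O(1)$ $N$-dimensional maximally entangled pairs, and using the Choi--Jamio\l{}kowski relation $(U\otimes V)\ket{\phi^+}=\ket{UV^T}$ they rotate these pairs with local unitaries built from $A,C$ (Alice) and $B,D$ (Bob) so that the resulting states' pairwise overlap is proportional to $\Tr(ABCD)/N$ (up to cyclic permutation). Forwarding the halves to Charlie, who performs a swap test, yields acceptance probability $\tfrac12(1+|\Tr(ABCD)|^2/N^2)$, distinguishing the two cases after $O(1)$ parallel repetitions and $O(\log N)$ qubits of total communication.

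\textbf{Plan for part~\ref{item3} (classical lower bound, the main contribution).} I would prove the $\Omega(\sqrt N)$ bound against two-way randomized protocols via a distributional argument using the Haar measure $\mu$ on $\SU(N)^4$. Under $\mu$, the random variable $\Tr(ABCD)/N$ has mean zero and concentrates tightly at zero, so the unrestricted $\mu$-measure of the YES event $\mathcal{Y}=\{\Tr(ABCD)\ge 0.9N\}$ is vanishingly small while the NO event carries almost all of $\mu$. Any interactive randomized protocol of cost $c$ partitions $\SU(N)^4$ into at most $2^c$ combinatorial rectangles $R_A\times R_B\subseteq\SU(N)^2\times\SU(N)^2$, so by a standard corruption argument it suffices to show that no rectangle of Haar mass $\ge 2^{-c}$ is corrupted by $\mathcal{Y}$ to constant density unless $c=\Omega(\sqrt N)$. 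Expanding $\indi_{R_A}$, $\indi_{R_B}$, and a smooth proxy for $\indi_{\mathcal{Y}}$ in the Peter--Weyl basis on $\SU(N)^4$, the correlation reduces to a sum over irreducible representations of products of Fourier coefficients. I would then apply the Ellis--Kindler--Lifshitz--Minzer hypercontractive inequality on $\SU(N)$ to bound the contribution of the ``high-level'' part of $\indi_{R_A},\indi_{R_B}$, i.e., the part supported on irreducibles appearing in tensor powers of the defining representation of degree $\gtrsim\sqrt N$, while invoking Schur--Weyl duality to argue that the moments of $\Tr(ABCD)$ live in a narrow window of levels. Only the low-level parts of the rectangle indicators can therefore correlate with $\mathcal{Y}$, and hypercontractivity combined with Cauchy--Schwarz should yield the required bound at the $\sqrt N$ threshold.

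\textbf{Main obstacle.} The delicate step is adapting the EKLM hypercontractive inequality---stated for functions on a single copy of $\SU(N)$ with a particular representation-theoretic level decomposition---to control correlations of rectangles in the product group $\SU(N)^2\times\SU(N)^2$ with the four-variable function $\Tr(ABCD)$. Setting up the correct product level structure, identifying precisely which irreducibles of $\SU(N)^4$ appear in the Peter--Weyl expansion of $\Tr(ABCD)$ and its higher moments, and calibrating the level threshold to the communication parameter $c$ so that the $\sqrt N$ scale emerges from Haar concentration is the technical heart of the argument. Once these representation-theoretic identifications are made, the final second-moment estimate should be routine.
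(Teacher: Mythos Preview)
Your plans for parts~\ref{item1} and~\ref{item2} match the paper's protocols. One small correction on part~\ref{item2}: a bare swap test yields acceptance probability $\tfrac12\bigl(1+|\Tr(ABCD)|^2/N^2\bigr)$, which is sensitive only to $|\Tr(ABCD)|$ and cannot separate $\Tr(ABCD)=0.9N$ from $\Tr(ABCD)=-0.9N$ (the latter is a \textsc{no} instance). The paper's entangled-fingerprinting protocol adds a control qubit so that the acceptance probability is affine in $\Re\Tr(ABCD)$; this is an easy fix.

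For part~\ref{item3} you have the right toolkit (Peter--Weyl, the EKLM level-$d$ inequality, Cauchy--Schwarz across levels), but the distributional setup has a genuine gap. Under the single Haar measure $\mu$ on $\SU(N)^4$ the product $ABCD$ is itself Haar, so $\mu(\mathcal Y)=\mu\{\Tr(ABCD)\ge 0.9N\}$ is exponentially small in $N$. Hence \emph{no} rectangle of mass $\ge 2^{-\sqrt N}$ can have constant $\mathcal Y$-density: the statement you set out to prove is vacuously true and yields no lower bound (equivalently, ``always output \textsc{no}'' succeeds with overwhelming $\mu$-probability, so $\mu$ alone is not a hard distribution). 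Replacing $\indi_{\mathcal Y}$ by a smooth proxy does not repair this, because any proxy that detects the event must carry Fourier mass at degrees far above $\sqrt N$, outside the regime where EKLM controls the rectangle indicators; Schur--Weyl only handles fixed moments $(\Tr ABCD)^k$, not the indicator.

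The paper instead compares two distributions: \textsc{yes} takes $A,B,C$ Haar and $D=(ABC)^{-1}$, while \textsc{no} takes all four Haar. The exact algebraic constraint $D=(ABC)^{-1}$ is the missing idea. It makes $\E\bigl[f(A,C)\,g(B,(ABC)^{-1})\bigr]=\langle\mud,f*g\rangle$; by Schur orthogonality $\widehat{\mud}(\pi,\sigma)=0$ unless $\pi=\sigma$, and convolution contributes a clean factor $1/\dim(\pi)$, giving the closed form $\sum_{\pi\ne\mathrm{triv}}\dim(\pi)^{-1}\langle\widehat f(\pi,\pi),\widehat g(\pi,\pi)\rangle$ with no proxy and no Schur--Weyl. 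Now group by level $d$, bound the numerator by $\|f^{=2d}\|_2\|g^{=2d}\|_2$ via the EKLM level inequality, and beat it with the dimension bound $\min_{\pi\in L_d}\dim(\pi)\gtrsim(N/d)^d$. The $\sqrt N$ threshold appears because the EKLM level inequality is valid only for $d\le c\sqrt N$; beyond that one falls back on the trivial $\|f^{=d}\|_2^2\le\E[f]$, which forces $\E[f],\E[g]\ge e^{-c'\sqrt N}$. In short, the obstacle you flag---locating the Peter--Weyl support of $\Tr(ABCD)$ and its moments---is exactly what the paper sidesteps by choosing the \textsc{yes} distribution to be a convolution constraint rather than a thickened level set.
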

 We remark that the classical lower bound is tight as shown in~\cite{KL19}.\footnote{Although they state their protocol for the $\mathsf{ABC}$ problem, the protocol trivially extends to the \ABCD~problem.}






\subsection{Proof Overview}

We now describe our classical lower bound for the $\ABCD$ problem, the main technical contribution. Our proof is based on a combination of three main ingredients. The first two of which are dimensional lower bounds for irreducible representations and formulas for convolution; these involve the representation theory of compact Lie groups. The third ingredient is the level-$d$ inequality of \cite{EKLM23}, which shows that the Fourier spectrum of an indicator of a small subset of $\SU(n)$ is concentrated on the high dimensional representations. 

\paragraph*{Translating our lower bound to an analytic statement.} 

To prove our lower bound, we define two distributions: Alice's inputs $A,C$ and Bob's input $B$ are chosen uniformly from $\SU(N)$ and in the
\textsc{yes} distribution, Bob is given $D=(ABC)^{-1}$ and  in the \textsc{no} distribution, Bob is given uniformly random $D$ from $\SU(N)$. We show that distinguishing between these two distributions~requires $\Omega(\sqrt{N})$ communication. It is well-known, if we consider the matrix with rows and columns indexed by inputs of Alice and Bob respectively, then a classical cost-$c$ communication protocol partitions this matrix into $2^c$ combinatorial rectangles. So, for a cost $c$, a typical rectangle in this partition has measure $\approx 2^{-c}$. Thus, to prove that a certain function requires $\Omega(\sqrt{N})$ classical communication cost, it suffices to show that rectangles of measure $\approx 2^{-\sqrt{N}}$ cannot distinguish the \textsc{yes} and \textsc{no} instances of the function with sufficient advantage.  Translating this to our setting, the main technical heart of our paper is 
the following lemma about large rectangles $f\times g$ in $\SU(N)^2\times \SU(N)^2$ (think of $f \subseteq \SU(n)^2$ (resp.~$g$) as indicators of Alice (resp.~Bobs) inputs in that~rectangle).
\begin{lemma}[Main Lemma] \label{lemma:main_lemma_intro}
Let $f,g:\SU(N)^2\to \{0,1\}$ be indicator functions such that $\E[f]=\alpha$ and $\E[g]=\beta$ for $\alpha,\beta \ge e^{-c'\sqrt{N}}$ for a sufficiently small global constant $c'>0$. Then,
\[ \E\sbra{f(A,C)\cdot g(B,(ABC)^{-1})}\approx \alpha\beta\cdot (1\pm 0.1)\]
where $A,B,C,D$ are chosen independently according to the Haar probability measure over $\SU(N)$.
\end{lemma}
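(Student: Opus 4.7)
The plan is to Fourier-analyze the statement on the group $\SU(N)^2$ using the Peter--Weyl theorem. Writing
\[
f(A,C) = \sum_{\rho,\tau} d_\rho d_\tau \mathrm{Tr}\bigl(\hat f(\rho,\tau)(\rho(A^{-1})\otimes\tau(C^{-1}))\bigr),
\]
and the analogous expansion for $g(B,D)$, then substituting $D^{-1}=ABC$ (so that $\sigma(D^{-1})=\sigma(A)\sigma(B)\sigma(C)$ for any representation $\sigma$), the expectation factors into three independent Haar integrals over $A$, $B$, and $C$. By Schur orthogonality, the $A$-integral forces the first Fourier label of $f$ to coincide with the second label of $g$, the $C$-integral forces the second label of $f$ to coincide with the second label of $g$, and the $B$-integral forces the two labels of $g$ to coincide. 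Together these collapse the quadruple $(\rho,\tau,\sigma_1,\sigma_2)$ of representation labels to a single irrep $\rho$, and each of the three projections contributes a factor $d_\rho^{-1}$.

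After collecting the four $d_\rho$ factors from the Peter--Weyl weights and the three $d_\rho^{-1}$ factors from the Schur projections, the trivial representation contributes exactly $\hat f(1,1)\cdot\hat g(1,1)=\alpha\beta$, which is the desired main term. The error takes the form
\[
\mathsf{Err} \;=\; \sum_{\rho\neq 1} d_\rho \cdot T\bigl(\hat f(\rho,\rho),\hat g(\rho,\rho)\bigr),
\]
where $T$ is an explicit bilinear contraction of Fourier coefficients (operators on $V_\rho\otimes V_\rho$). A first Cauchy--Schwarz pass bounds this, for the relevant exponent $a$, by
\[
|\mathsf{Err}| \;\le\; \Bigl(\sum_{\rho\neq 1} d_\rho^{\,a} \|\hat f(\rho,\rho)\|_{\mathrm{HS}}^2\Bigr)^{1/2}\Bigl(\sum_{\rho\neq 1} d_\rho^{\,a}\|\hat g(\rho,\rho)\|_{\mathrm{HS}}^2\Bigr)^{1/2},
\]
and the remaining task is to show this is at most $0.1\,\alpha\beta$.

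The crux is to apply the level-$d$ hypercontractive inequality of \cite{EKLM23}, which asserts that for an indicator $f$ of measure $\alpha$ on $\SU(N)$, the Fourier mass on irreps of \emph{level} at most $d$ (roughly, those arising from Young diagrams with at most $d$ boxes) is bounded by $\alpha^{1+\Omega(1/d)}$, rather than the Parseval value $\alpha$. Combined with the quasirandomness of $\SU(N)$---every nontrivial irrep has dimension at least $N-1$, and irreps at level $d$ have dimension at least $\sim N^d/d!$---one partitions $\mathsf{Err}$ by the level of $\rho$: low-level contributions are controlled by the hypercontractive sharpening, while high-level contributions are controlled by Parseval together with the large dimension lower bound. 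Choosing a truncation level $d_0\asymp\sqrt{N}$ and invoking the hypothesis $\alpha,\beta\ge e^{-c'\sqrt N}$ balances the two regimes and yields $|\mathsf{Err}|\le 0.1\,\alpha\beta$.

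The main obstacle will be the bookkeeping required to match the Young-diagram combinatorics at each level against the dimension factors produced by Schur orthogonality, and to lift the scalar $\SU(N)$ hypercontractivity of \cite{EKLM23} to the product group $\SU(N)^2$ where the coefficients $\hat f(\rho,\tau)$ live (which should follow by tensorization, but requires that the product level structure be compatible with the diagonal $\rho=\tau$ restriction that survives Schur). Crucially, the naive Parseval-plus-quasirandomness bound yields only $|\mathsf{Err}|\lesssim \sqrt{\alpha\beta}/N^{O(1)}$, which already fails at $\alpha\beta\ll 1/N^2$ and is therefore far too weak at the exponential threshold $e^{-c'\sqrt N}$; the hypercontractive sharpening of \cite{EKLM23} is what closes this gap and pins down the precise threshold in the statement.
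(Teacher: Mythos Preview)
Your proposal is correct and essentially identical to the paper's proof: the Fourier/Schur collapse to a diagonal sum over irreps, the level-by-level Cauchy--Schwarz, the EKLM level-$d$ inequality, and the dimension lower bounds $Q_d\gtrsim(N/d)^d$ are exactly the ingredients the paper assembles (Claim~\ref{claim:fourier_expansion}, Claim~\ref{corollary:level_k_inequality}, Theorems~\ref{theorem:level_k_inequality}--\ref{theorem:minimal_irreps}), and your power-counting ``four $d_\rho$'s from Peter--Weyl, three $d_\rho^{-1}$'s from Schur'' matches the paper's $\tfrac{1}{\dim\pi}$ factor once normalizations are reconciled. The two obstacles you flag are non-issues in the paper---the $\SU(N)^2$ level-$k$ inequality is quoted directly from EKLM, and since $(\rho,\rho)$ with $\rho$ at level $d$ sits in the level-$2d$ component of $L^2(\SU(N)^2)$, the diagonal restriction is simply absorbed into the bound $\sum_{\rho\in L_d}\|\hat f(\rho,\rho)\|^2\le\|f^{=2d}\|_2^2$.
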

We now sketch the proof of this lemma. 
\paragraph{Applying convolution formulas from nonabelian Fourier analysis}

Firstly, we study the difference  between $\E[f(A,C)\cdot g(B,(ABC)^{-1})]$ and $\alpha\beta:=\E[f(A,C)\cdot g(B,D)]$ and derive an expression for this in terms of the product of Fourier coefficients of $f$ and $g$. 
\begin{claim}[Main Claim]\label{claim:fourier_expansion_intro}
    Let $G=\SU(N)$ and   $f,g:G\times G\to \{0,1\}$ be the indicator functions with $\alpha=\E[f]$ and $\beta=\E[g]$. Then, 
$$
\Delta:= \E\sbra{f(A,C)g(B,(ABC)^{-1})}-  \alpha\beta=\sum_{\emptyset \neq\pi\in \widehat{G}}\frac{1}{\dim(\pi)} \abra{ \widehat{f}(\pi,\pi), \widehat{g}(\pi,\pi)}.
$$
where $\widehat{G}$ denotes the equivalence class of irreps of $G$ and $\widehat{f}(\pi,\pi)\in \Cbb^{\dim(\pi,\pi)\times \dim(\pi,\pi)}$ denotes the (matrix) Fourier coefficient corresponding to $\pi$. 
\end{claim}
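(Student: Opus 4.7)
The plan is to apply the Peter--Weyl theorem on $G = \SU(N)$, encoding the constraint $D = (ABC)^{-1}$ as a Dirac delta at the identity and expanding that delta as a character series. First I would rewrite the left-hand side via Haar invariance as
\[
	\E_{A,B,C}\sbra{f(A,C)\, g(B,(ABC)^{-1})} = \int_{G^4} f(A,C)\, g(B,D)\, \delta_e(ABCD)\, dA\,dB\,dC\,dD,
\]
and then invoke the Peter--Weyl character expansion $\delta_e(X) = \sum_{\pi \in \widehat G} \dim(\pi)\, \chi_\pi(X)$ (interpreted distributionally, e.g.\ via approximate identities) to obtain
\[
	\E\sbra{f(A,C)\, g(B,(ABC)^{-1})} = \sum_{\pi \in \widehat G} \dim(\pi)\, \E_{A,B,C,D}\sbra{f(A,C)\, g(B,D)\, \chi_\pi(ABCD)},
\]
where the four Haar variables on the right are now independent.

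Next I would evaluate each summand using the tensor-product trace identity
\[
	\chi_\pi(ABCD) = \Tr\bigl((\pi(A) \otimes \pi(C))(\pi(B) \otimes \pi(D))\, S_\pi\bigr),
\]
where $S_\pi$ is the swap operator on $V_\pi \otimes V_\pi$. This follows from $\Tr((X \otimes Y) S_\pi) = \Tr(XY)$ applied to $X = \pi(A)\pi(B)$, $Y = \pi(C)\pi(D)$, combined with the bilinearity identity $(\pi(A) \otimes \pi(C))(\pi(B) \otimes \pi(D)) = \pi(A)\pi(B) \otimes \pi(C)\pi(D)$. The factor $\pi(A) \otimes \pi(C)$ depends only on $(A,C)$ and $\pi(B) \otimes \pi(D)$ only on $(B,D)$, so by independence
\[
	\E\sbra{f(A,C)\, g(B,D)\, \chi_\pi(ABCD)} = \Tr\bigl(\widehat f(\pi,\pi)\, \widehat g(\pi,\pi)\, S_\pi\bigr),
\]
where $\widehat f(\pi,\pi) := \E_{A,C}\sbra{f(A,C)\, \pi(A) \otimes \pi(C)}$ (up to the normalization used in the paper) is the matrix Fourier coefficient of $f$ at the irrep $\pi \boxtimes \pi$ of $G \times G$.

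To conclude, I would separate off the trivial representation $\pi = \mathbf 1$, which contributes $\widehat f(\mathbf 1, \mathbf 1)\, \widehat g(\mathbf 1, \mathbf 1) = \alpha\beta$---exactly the term subtracted off to form $\Delta$. The remaining sum over nontrivial $\pi$ is then converted into the claimed form $\sum_{\pi \ne \mathbf 1} \tfrac{1}{\dim(\pi)} \abra{\widehat f(\pi,\pi),\, \widehat g(\pi,\pi)}$ by a direct index manipulation. The factor $1/\dim(\pi)$ emerges from the Schur orthogonality relation $\int \pi_{ij}(x)\, \overline{\pi_{kl}(x)}\, dx = \tfrac{\delta_{ik}\delta_{jl}}{\dim(\pi)}$: after unpacking the trace-with-swap in components and relabeling, the single factor of $\dim(\pi)$ in front combines with the $\dim(\pi)^2$ from the tensor product of matrix coefficients to leave $\dim(\pi)^{-1}$ times the matrix inner product in the paper's normalization.

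The step I expect to be the main obstacle is precisely this last convention-matching: unwinding how the paper defines $\widehat f(\pi,\pi)$ and the bracket $\abra{\cdot,\cdot}$ so that the trace-with-swap expression becomes $\tfrac{1}{\dim(\pi)}\abra{\widehat f(\pi,\pi),\widehat g(\pi,\pi)}$ with exactly the right prefactor and no hidden swap, rather than anything genuinely conceptual. The conceptual content of the claim is just Peter--Weyl on $G \times G$ combined with independence of $(A,C)$ and $(B,D)$; no quantitative input specific to $\SU(N)$ is used here, and the deeper tools---hypercontractivity and dimension lower bounds from~\cite{EKLM23}---only come into play later when bounding the resulting Fourier sum to establish the Main Lemma.
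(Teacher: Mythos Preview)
Your approach is correct and reaches the identity by a somewhat different route than the paper. The paper rewrites the expectation as $\langle\mud,f*g\rangle$, where $\mud$ is the push-forward of Haar measure under $x\mapsto(x,x^{-1})$ on $G\times G$; it then computes $\widehat{\mud}(\pi,\sigma)$ directly via Schur orthogonality (this is what kills the off-diagonal $\pi\neq\sigma$ terms) and expands $\widehat{f*g}(\pi,\pi)$ via the convolution formula on $G\times G$ (this is where the $1/\dim(\pi)$ appears). You instead encode the constraint through the character expansion $\delta_e(ABCD)=\sum_\pi\dim(\pi)\,\Tr\pi(ABCD)$ and use the swap identity $\Tr\pi(ABCD)=\Tr\bigl((\pi(A)\otimes\pi(C))(\pi(B)\otimes\pi(D))\,S_\pi\bigr)$ to separate Alice's and Bob's variables in a single stroke. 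Your packaging is arguably more direct---the diagonal restriction to $(\pi,\pi)$ and the $(A,C)$ versus $(B,D)$ factorization both fall out of the swap trick without a separate Schur-type computation---while the paper's convolution/Plancherel framing has the mild advantage of staying entirely within $L^2(G\times G)$, so it sidesteps the distributional character expansion of $\delta_e$ that you flag as needing an approximate-identity argument. You are also right that what remains is pure convention-matching: in the paper's normalization $\widetilde\pi=\sqrt{\dim\pi}\,\pi$, the detailed version (Claim~\ref{claim:fourier_expansion}) spells the bracket out as $\sum_{i,j,k,\ell}\widehat f(\pi,\pi)_{k,j,\ell,i}\,\widehat g(\pi,\pi)_{i,k,j,\ell}$, a permuted-index contraction that is exactly your residual $\Tr(\,\cdot\,S_\pi)$.
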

The proof of this uses Fourier analysis, especially facts about Fourier coefficients of convolutions of functions. Loosely speaking,
\begin{itemize}
    \item Taking an expectation of $f\times g$ over the distribution induced by $(A,B,C,(ABC)^{-1})$ has the effect of taking the convolution of $f$ and $g$, which in the Fourier basis, translates to a taking a product of Fourier coefficients of $f$ and $g$, divided by the square root of the dimension.
\item The term $(ABC)^{-1}$ has the effect of zeroing out Fourier coefficients corresponding to $(\pi,\sigma)$ where $\pi$ and $\sigma$ are inequivalent representations of $G$.
\end{itemize}

\subsection*{Utilizing the Fourier concentration on the high dimensions} 
We now describe how to upper bound the R.H.S. of~\Cref{claim:fourier_expansion_intro}. To do this, we will use the degree-decomposition of $f$ and $g$. It turns out that the space $L^2(G)=\{f:G\to \Cbb,\E[\abs{f}^2]<\infty\}$ can be expressed as $\oplus_{d\in  \Nbb} V_{d}\oplus V_0$, where $V_0$ consists of constant functions, $V_d$ essentially captures polynomials of ``pure-degree'' $d$, furthermore, each $V_d$ is a sub-representation of $G$~\cite{EKLM23}. We group the terms in the R.H.S. of~\Cref{claim:fourier_expansion_intro} based on this degree decomposition to obtain
\[\Delta= \sum_{d=1}^\infty \sum_{\pi\in\widehat{V_d}} \frac{1}{\dim(\pi)} \abra{ \widehat{f}(\pi,\pi), \widehat{g}(\pi,\pi))}\]
We now apply Cauchy-Schwarz to upper bound $\abra{ \widehat{f}(\pi,\pi), \widehat{g}(\pi,\pi))}$ by $\|\widehat{f}(\pi,\pi)\|_{2} \cdot \|\widehat{g}(\pi,\pi)\|_{2}$. 
We again apply Cauchy-Schwarz over terms $\pi\in \widehat{V_d}$ to obtain
\[\Delta \le   \sum_{d=1}^\infty  \sqrt{\sum_{\pi\in\widehat{V_d}}\|\widehat{f}(\pi,\pi)\|^2} \cdot  \sqrt{\sum_{\pi\in\widehat{V_d}}\|\widehat{g}(\pi,\pi)\|^2} \cdot \max_{\pi\in\widehat{V_d}}\pbra{\frac{1}{\dim(\pi)}}
   \]
Observe that $\widehat{f}(\pi,\pi)$ and $\widehat{g}(\pi,\pi)$ correspond to the degree-$2d$ component. Thus, 
\begin{align}\label{eq:intro} 
    \Delta &\le  \sum_{d=1}^\infty  \vabs{f^{=2d}}\cdot \vabs{g^{=2d}}\cdot \pbra{\min_{\pi\in\widehat{V_d}}{\dim(\pi)}}^{-1}  \end{align}
where $f^{=2d},g^{=2d}$ denote the projection of $f,g$ onto the degree $2d$ part. We now use the main results of~\cite{EKLM23}. Two important contributions of~\cite{EKLM23} are the following. Firstly, the dimensions of irreps of $\widehat{V_{d}}$ grow fast, roughly as $\gtrapprox N^{d}$; secondly, an analogue of the level-$k$ inequality holds for $\SU(N)$ and its variants:
\begin{lemma}[Implied by~\cite{EKLM23}]There exists universal constants $c,C>0$ such that the following holds. Let $f:\SU(N)^2\to \{0,1\}$,  $\alpha=\E[f]$ and $d\le \min\{c\sqrt{N},\log(1/\alpha)/2\}$. Then
\[
\vabs{f^{=d}}_2^2 \le \pbra{C/d}^d   \alpha^2\log^d(1/\alpha). 
\]
\end{lemma}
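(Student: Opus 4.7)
The plan is to derive the stated level-$d$ inequality from the Bonami–Beckner–style hypercontractive inequality of~\cite{EKLM23} via the standard optimization trick, adapted to $\SU(N)$. First I would invoke their hypercontractive inequality for $\SU(N)$, which I expect to take the form that the noise operator $T_\rho$, defined as multiplication by $\rho^k$ on each pure-degree component $V_k$, satisfies
\[
\vabs{T_\rho h}_2 \le \vabs{h}_{1+\rho^2}
\]
for any $h:\SU(N)\to\Cbb$ whose highest occurring degree is at most $c\sqrt{N}$, and any $\rho\in[0,1]$. By the standard tensorization argument for hypercontractive semigroups, the same inequality lifts to functions on $\SU(N)^2$ with an unchanged constant, so it applies to our $f:\SU(N)^2\to\{0,1\}$ after first replacing $f$ by its low-degree truncation $f^{\le d}$ (whose maximum degree is $d\le c\sqrt{N}$ by hypothesis).

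Next I would exploit that $T_\rho$ multiplies the degree-$d$ part by $\rho^d$: by Parseval, $\vabs{T_\rho f}_2^2 = \sum_k \rho^{2k}\vabs{f^{=k}}_2^2 \ge \rho^{2d}\vabs{f^{=d}}_2^2$, while the hypercontractive inequality and the fact that $f$ is $\{0,1\}$-valued give $\vabs{T_\rho f}_2^2 \le \vabs{f}_{1+\rho^2}^2 = \alpha^{2/(1+\rho^2)}$. Combining these two bounds yields
\[
\vabs{f^{=d}}_2^2 \le \rho^{-2d}\,\alpha^{2/(1+\rho^2)}.
\]
The final step is to optimize over $\rho$. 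Choosing $\rho^2 = d/(2\log(1/\alpha))$ — which lies in $(0,1/4]$ thanks to the assumption $d\le \log(1/\alpha)/2$ — one has $\rho^{-2d} = (2\log(1/\alpha)/d)^d$ and $\alpha^{-2\rho^2/(1+\rho^2)} \le e^{2\rho^2\log(1/\alpha)} = e^d$, producing $\vabs{f^{=d}}_2^2 \le (2e\log(1/\alpha)/d)^d \alpha^2$, i.e.\ the claimed bound with $C=2e$. The degree restriction $d\le c\sqrt{N}$ enters only through the validity of the hypercontractive inequality on $V_0\oplus\cdots\oplus V_d$.

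The main obstacle is reconciling the form of the EKLM23 inequality with the clean Bonami–Beckner statement I used above. Their hypercontractivity is stated for $\SU(N)$ itself and its precise quantitative form (including whether it is phrased for the heat semigroup, for level-sets of representations, or only for ``global'' functions in their sense) may require some massaging before it becomes usable in the shape $\vabs{T_\rho h}_2\le\vabs{h}_{1+\rho^2}$. In particular, if their result is only available for a specific value of $\rho$ (or a restricted set), one either iterates the inequality or re-optimises the trade-off between the noise parameter and the $L^p$ exponent; the structure of the final bound is unchanged, but the constant $C$ would deteriorate to match whatever noise parameter their theorem guarantees. Once this translation is made, the derivation above goes through verbatim.
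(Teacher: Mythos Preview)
The paper does not prove this lemma; it is quoted as a black box from Theorems~3.7 and~4.5 of~\cite{EKLM23} (restated later as Theorem~\ref{theorem:level_k_inequality}), so there is no in-paper derivation to compare against.

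Your derivation has a real gap, and it is not the one you flag at the end. You invoke the degree-restricted bound $\|T_\rho h\|_2 \le \|h\|_{1+\rho^2}$ (valid only for $h$ of degree $\le c\sqrt N$) and announce that you will therefore replace $f$ by its truncation $f^{\le d}$; but in the next line you silently revert to $f$ and use $\|f\|_{1+\rho^2}^{1+\rho^2}=\alpha$. You cannot have both. If you keep $f^{\le d}$ on the right-hand side, the $\{0,1\}$-valuedness is lost and $\|f^{\le d}\|_{1+\rho^2}$ need not be close to $\alpha^{1/(1+\rho^2)}$ (the low-degree projection is not an $L^p$-contraction for $p<2$, so there is no cheap way to pass back to $\|f\|_{1+\rho^2}$). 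If instead you keep $f$, hypercontractivity does not apply because $f$ has unbounded degree. Iterating or re-optimizing $\rho$ does not touch this.

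The fix is to argue by duality rather than directly. From the degree-restricted hypercontractivity one extracts the Khintchine form $\|h\|_q\le (Cq)^{d/2}\|h\|_2$ for $h\in V_{=d}$, $q\ge 2$, $d\le c\sqrt N$. Then H\"older on the pairing
\[
\|f^{=d}\|_2^2 \;=\; \langle f,\, f^{=d}\rangle \;\le\; \|f\|_{q'}\,\|f^{=d}\|_q \;\le\; \alpha^{1-1/q}\,(Cq)^{d/2}\,\|f^{=d}\|_2
\]
places the $\{0,1\}$-valuedness of $f$ and the hypercontractive bound on $f^{=d}$ on \emph{opposite} factors, sidestepping the truncation problem entirely. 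Cancelling one factor of $\|f^{=d}\|_2$ and choosing $q=2\log(1/\alpha)/d$ (which requires $q\ge 2$, i.e.\ $d\le \tfrac12\log(1/\alpha)$, explaining that hypothesis) then yields the stated bound with the same optimization you carried out.
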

Since $\E[f],\E[g]\ge e^{-c'\sqrt{N}}$ for a sufficiently small constant $c'$, this lemma essentially implies that
\[\vabs{f^{=2d}}_2^2,\vabs{g^{=2d}}_2^2\ll  \alpha\beta\cdot N^d\cdot 11^{-d}.\]
As mentioned earlier, we have $\min_{\pi\in \widehat{V_d}}\dim(\pi)\gtrapprox N^d$ and thus $\min_{\pi\in \widehat{V_d}}{\dim(\pi)}$ grows much faster than $\vabs{f^{=2d}}\cdot \vabs{g^{=2d}}$. Plugging this in Eq.~\eqref{eq:intro} implies the desired result:
\[\E[f(A,C)\cdot g(B,(ABC)^{-1})]-\alpha\beta \triangleq \Delta\ll \sum_{d=1}^\infty \alpha\beta\cdot N^d \cdot 11^{-d} \cdot 1/N^d \le \alpha \beta\cdot \sum_{d=1}^{\infty}11^{-d}\le \alpha\beta/10.\] 
Using standard techniques in communication complexity, we use the above inequality to show that every protocol of cost $\ll \sqrt{N}$ succeeds in solving the $\ABCD$ problem with probability $\leq 1/10$, completing the proof sketch. 



\paragraph*{Organization.} We describe notation in~\Cref{sec:notation} and some important results about special unitary matrices in~\Cref{sec:properties_sun}. We describe the quantum communication models in~\Cref{sec:main_proof_quantum}. As mentioned before, \Cref{item1} in~\Cref{theorem:main_overall_theorem} was essentially proved in~\cite{KL19}, but we reprove it in~\Cref{sec:clean_qubit_upper_bound} for completeness. We prove~\Cref{item2} in~\Cref{sec:quantum_upper_bound} and~\Cref{item3} in~\Cref{sec:classical_lower_bound}. 


\paragraph*{Acknowledgements.} This work was done in part while the authors were visiting the Simons Institute for the Theory of Computing. We thank Vojtech Havlicek, Tarun Kathuria, Ran Raz and Makrand Sinha for the many valuable discussions. 

\section{Notation}
\label{sec:notation}
Let $\Nbb=\{1,2,\ldots,\}$. We use $\indi[E]$ to denote the indicator of an event $E$.  Let $N=2^n$ for $n\in \Nbb$.  For a vector space $V$, we use $\Lcal(V)$ to denote the space of \emph{endomorphisms} of $V$ (i.e., set of linear maps from $V$ onto itself) and $\Gcal\Lcal(V)$ to denote the set of invertible endomorphisms in $\Lcal(V)$. For every compact group $G$ equipped with a Haar measure, we use $L^2(G)$ to denote the space of all square-integrable functions acting on $G$ quotiented by the equivalence relation $f\sim g$  if $f$ equals $g$ almost everywhere with respect to $\mu$. Mathematically, we write $L^2(G)$ as 
$$
L^2(G)=\{ f:G\to \Cbb \hspace{0.5mm} \vert \hspace{0.5mm} \E[|f|^2]<\infty \}/\sim,
$$
where the expectation is with respect to the Haar measure. One can also view $L^2(G)$ as a Hilbert space equipped with a natural inner product 
$$
\langle f_1,f_2\rangle = \E_g[f_1(g)\overline{f_2(g)}]
$$ 
for $f_1,f_2\in L^2(G)$, where $g$ is Haar random.

\section{Properties of Special Unitary Matrices}
\label{sec:properties_sun}

\paragraph*{Special Unitary Group.} We use $\SU(N)$ to denote the special unitary group of $N\times N$ matrices. Let $\Haar$ denote the Haar measure on $\SU(N)$. The Haar measure is the unique measure on $\SU(N)$ that is invariant under right-multiplication and left-multiplication by $\SU(N)$.  

\subsection{Representation Theory of \texorpdfstring{$\SU(N)$}{SU(N)}}

We now describe representations of groups. For a group $G$, a representation $(\pi,V)$ of $G$ is a \emph{group homomorphism} $\pi:G\to \Gcal\Lcal(V)$, i.e., a map from $G$ to non-singular complex matrices satisfying $\pi(gh)=\pi(g)\pi(h)$ for all $g,h\in G$ and $\pi(1)=\id$.  Throughout this paper, we will assume that  our group $G$ will be compact and  our representations are finite dimensional. For notational convenience, we refer to the representation $(\pi,V)$ simply as $V$ or as $\pi$. We also abuse notation by writing $\pi(g)v$ as $gv$ for $g\in G, v\in V$. For any finite dimensional representation $(\pi,V)$, there is a basis for $V$ according to which $\pi(g)$ is unitary for all $g\in G$ and we will typically work with such a basis. A $G$-\emph{morphism} between irreps is a map   $\varphi\colon V\to U,$ satisfying $\varphi(gv) = g\varphi(v)$ (i.e., $\varphi(\pi(g)v)=\pi(g)\varphi(v)$.) for all $g\in G$ and $v\in V$

\paragraph*{Irreducible representations and Schur's Lemma}

Two representations $(\pi, V), (\rho, U)$ are said to be \emph{isomorphic} (which we denote by $\pi \sim \rho$) if there exists an invertible $G$-morphism between them. Otherwise they are \emph{non-isomorphic} (denoted $\pi \nsim \rho$).   We denote $U\le V$ to be a \emph{subrepresentation} if $gu\in U$ for all $g\in G$ and $u\in U$. A representation is said to be an \emph{irreducible representation} (or \emph{irrep}) if its only subrepresentations are 0 and itself, i.e., if it cannot be decomposed as the direct sum of two non-trivial representations. We use $\widehat{G}$ to denote a complete set of irreps of $G$, that is, every irrep of $G$ is isomorphic to some irrep in $\widehat{G}$. We will also make use of Schur's lemma which we state below. 
\begin{lemma}[Schur's lemma]
    Let $(\pi,V),(\rho,U)$ be two irreps of $G$. Let $\varphi$ be a $G$-morphism between $\pi$ and $\rho$. If $\pi\nsim \rho$, then $\varphi$ is $0$ and if $\pi=\rho$, then $\varphi$ is a scalar multiple of~identity. 
\end{lemma}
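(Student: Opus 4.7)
The plan is a standard two-part argument based entirely on subrepresentation considerations. First I would examine the kernel and image of $\varphi$. Because $\varphi$ is a $G$-morphism, if $v\in\ker(\varphi)$ then $\varphi(gv)=g\varphi(v)=0$, so $\ker(\varphi)\le V$ is a subrepresentation; similarly $\mathrm{Im}(\varphi)\le U$ is a subrepresentation since $g\varphi(v)=\varphi(gv)\in\mathrm{Im}(\varphi)$. Irreducibility of $V$ forces $\ker(\varphi)\in\{0,V\}$ and irreducibility of $U$ forces $\mathrm{Im}(\varphi)\in\{0,U\}$.

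For the first claim, suppose $\pi\nsim\rho$. If $\varphi\neq 0$, then $\ker(\varphi)\neq V$, so $\ker(\varphi)=0$; and $\mathrm{Im}(\varphi)\neq 0$, so $\mathrm{Im}(\varphi)=U$. Thus $\varphi$ is an invertible $G$-morphism, i.e., an isomorphism between $\pi$ and $\rho$, contradicting $\pi\nsim\rho$. Hence $\varphi=0$.

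For the second claim, take $\pi=\rho$ (so $V=U$ and $\varphi$ is a $G$-endomorphism of $V$). Since we work over $\Cbb$ and $V$ is finite dimensional, $\varphi$ has at least one eigenvalue $\lambda\in\Cbb$. Consider $\psi:=\varphi-\lambda\cdot\mathrm{id}_V$. This is again a $G$-morphism from $\pi$ to itself (intertwining is preserved by scalar shifts of the identity, which trivially commutes with the $G$-action). By construction $\ker(\psi)\neq 0$, so by irreducibility $\ker(\psi)=V$, i.e., $\psi\equiv 0$ and $\varphi=\lambda\cdot\mathrm{id}_V$.

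I do not foresee a real obstacle; the only subtlety is the second part, where one must invoke the existence of an eigenvalue (which uses algebraic closedness of $\Cbb$ and finite-dimensionality of $V$) and then re-apply the ``$\ker$ is a subrepresentation'' observation to the shifted map $\varphi-\lambda\cdot\mathrm{id}$. In particular, the argument does not require unitarity of $\pi$ or any inner product structure, only that $V,U$ are finite-dimensional complex vector spaces and that $\varphi$ intertwines the $G$-actions.
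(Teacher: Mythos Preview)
Your argument is the standard, correct proof of Schur's lemma. Note, however, that the paper does not give its own proof of this statement: it merely states Schur's lemma as a classical fact and moves on, so there is nothing to compare against beyond observing that your write-up supplies exactly the textbook justification the paper omits.
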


\paragraph*{Matrix Coefficients}
Let $(\pi,V)\in \widehat{G}$ be an irrep of $G$. We use $M_\pi\subseteq L^2(G)$ to denote the space spanned by functions $f_{u,v}:G\to \Rbb$ of the form $g\to \langle u,\pi(g) v\rangle$ for $u,v\in V$. We refer to $M_\pi$ as the space of \emph{matrix coefficients} associated to the representation $\pi$. For any $i,j\in [\dim(V)]$, let $\widetilde{\pi}_{i,j}$ be $\sqrt{\dim(V)} \pi_{i,j}$ where $\pi_{i,j}$ is the function defined as 
$$
{\pi}_{i,j}:g \mapsto  \pi(g)_{i,j} = \langle e_i,\pi(g) e_j\rangle
$$ for all $g\in G$. 
We now state  Schur's orthogonality relation, a corollary of Schur's~lemma.
\begin{fact}[Schur's Orthogonality Relations] \label{fact:schurs_orthogonality}
	Let $(\pi,V),(\sigma,W)\in \widehat{G}$.  Then, 
 $$
 \E_g\sbra{ \widetilde{\pi}(g)_{i,j}\cdot  \overline{\widetilde{\sigma}(g)_{k,\ell}}  }=\indi[\sigma= \pi, i=k,j=\ell], \qquad \forall \hspace{1mm} i,j\in [\dim(V)],k,\ell \in [\dim(W)].
 $$
\end{fact}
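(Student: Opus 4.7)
The plan is to prove Schur's Orthogonality Relations by the classical ``averaging-plus-Schur'' argument. Fix an arbitrary linear map $T\colon W\to V$ and define
\[
\widetilde T \;:=\; \E_g\sbra{\pi(g)\,T\,\sigma(g^{-1})},
\]
where the expectation is taken against the Haar measure on $G$. Using the left-invariance of the Haar measure (i.e., the substitution $g\mapsto h^{-1}g$), a one-line computation yields $\pi(h)\,\widetilde T = \widetilde T\,\sigma(h)$ for every $h\in G$, so $\widetilde T$ is a $G$-morphism from $\sigma$ to $\pi$.

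Next I would invoke Schur's Lemma, which was stated just above the fact. If $\pi\nsim\sigma$, then $\widetilde T = 0$ for every choice of $T$. If $\sigma = \pi$, then $\widetilde T = c_T\cdot \id$ for some scalar $c_T$; taking the trace and using cyclicity yields $c_T = \Tr(T)/\dim(V)$.

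To read off the matrix-coefficient identity, I would specialize to $T = E_{j,\ell}$, the elementary matrix with a $1$ in position $(j,\ell)$ and zeros elsewhere. Since we work in a basis in which each $\sigma(g)$ is unitary, $\sigma(g^{-1})_{\ell,k} = \overline{\sigma(g)_{k,\ell}}$, so a direct entry-wise computation gives
\[
\widetilde T_{i,k} \;=\; \E_g\sbra{\pi(g)_{i,j}\,\overline{\sigma(g)_{k,\ell}}}.
\]
In the non-isomorphic case the right-hand side equals $0$; when $\sigma=\pi$ it equals $\indi[i=k,\,j=\ell]/\dim(V)$. Rescaling by $\sqrt{\dim(V)\dim(W)}$ to pass from $\pi,\sigma$ to $\widetilde\pi,\widetilde\sigma$ does not change the zero in the first case, while in the case $\sigma=\pi$ we have $\dim(V)=\dim(W)$ and the prefactor exactly cancels the $1/\dim(V)$, yielding $\indi[\sigma=\pi,\,i=k,\,j=\ell]$ as required.

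The argument is standard and no single step is genuinely hard; the only bookkeeping point to watch is the distinction between ``$\sigma \sim \pi$'' and ``$\sigma = \pi$''. Since the fact quantifies over the fixed complete set $\widehat G$ of irreps, and distinct members of $\widehat G$ are by definition non-isomorphic, the two cases produced by Schur's Lemma correspond exactly to the two cases in the statement.
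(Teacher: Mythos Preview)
Your proof is correct and is exactly the classical derivation the paper has in mind: the paper does not give a proof of this fact at all, merely remarking that it is ``a corollary of Schur's lemma,'' and your averaging-plus-Schur argument is precisely that corollary spelled out. There is nothing to compare beyond this.
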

In other words,  $\{ \widetilde{\pi}_{i,j}: i,j\in \dim(V)\} $ is an orthonormal basis for $M_\pi$ and,  $M_\pi,M_\sigma$ are orthogonal for $\pi\nsim \sigma$.

\paragraph*{Peter-Weyl Theorem} 
The Peter Weyl theorem states that the space of all matrix coefficients is dense in $L^2(G)$. In other words, $L^2(G)$ can be decomposed as an orthogonal direct sum  of matrix coefficients  $\{M_\pi:\pi \in \widehat{G}\}$. 
\begin{theorem}[Peter-Weyl Theorem]
If $G$ is a group equipped with the Haar measure, then
$$
L^2(G)={\underset{\pi\in \widehat{G}}{\bigoplus}}M_\pi.
$$
Furthermore, every closed subspace of $W\subseteq L^2(G)$,  that commutes
with the action of $G$ from both sides can be written as $W={\underset{\pi\in L_W}{\bigoplus}}M_\pi$ for some $L_W\subseteq \widehat{G}$.
\end{theorem}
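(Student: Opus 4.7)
The plan is to establish the decomposition $L^2(G) = \bigoplus_{\pi \in \widehat{G}} M_\pi$ first, and then deduce the ``furthermore'' statement by viewing $L^2(G)$ as a $G \times G$-bimodule under left-and-right translation. Schur's orthogonality relations (\Cref{fact:schurs_orthogonality}) already show the spaces $M_\pi$ are pairwise orthogonal closed subspaces of $L^2(G)$; the content of the theorem is that their closed linear span is all of $L^2(G)$, and then that the $M_\pi$ are exactly the $G \times G$-isotypic components.

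For density I would appeal to the spectral theory of convolution operators. For any continuous $\phi\colon G\to\mathbb{C}$ satisfying $\phi(g^{-1}) = \overline{\phi(g)}$, define
\[ T_\phi f(x) \;=\; \int_G \phi(xy^{-1})\, f(y)\, \mathrm{d}\mu(y), \qquad f\in L^2(G). \]
Compactness of $G$ together with continuity of $\phi$ forces the kernel $(x,y)\mapsto \phi(xy^{-1})$ to lie in $L^2(G\times G)$, so $T_\phi$ is Hilbert--Schmidt and hence compact; the symmetry condition on $\phi$ makes it self-adjoint. A direct change of variables shows $T_\phi$ commutes with the right-translation action $R_g f(x) = f(xg)$. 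By the spectral theorem for compact self-adjoint operators, each nonzero eigenspace of $T_\phi$ is then a finite-dimensional right-$G$-invariant subspace. Every such eigenspace decomposes into irreducible subrepresentations of the right regular action, and each vector inside an irreducible copy of $(\pi, V_\pi)$ is a matrix coefficient of $\pi$, hence lies in $M_\pi$. Letting $\phi$ range over a sequence $(\phi_n)$ of symmetric continuous approximate identities concentrated in shrinking neighborhoods of $1\in G$, the spectral decomposition of $T_{\phi_n}$ forces $T_{\phi_n}f \to f$ in $L^2$ for every $f \in C(G)$; regularity of the Haar measure then gives $\sum_\pi M_\pi$ dense in $L^2(G)$. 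Combined with Schur orthogonality this yields the Hilbert direct sum $L^2(G) = \bigoplus_\pi M_\pi$.

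For the ``furthermore'' part, equip $L^2(G)$ with the $G\times G$-action $((g,h)\cdot f)(x) = f(g^{-1}xh)$; a closed subspace $W$ commutes with the $G$-action from both sides precisely when it is closed and $G\times G$-invariant. Each $M_\pi$ is $G\times G$-invariant, and under the standard identification $M_\pi \cong V_\pi \otimes V_\pi^*$ as $G\times G$-modules it is irreducible, since the external tensor product of two irreducibles is irreducible (another application of Schur's lemma, applied to each factor). Distinct $M_\pi$ are pairwise non-isomorphic as $G\times G$-modules, so the family $\{M_\pi\}_{\pi\in\widehat{G}}$ gives the $G\times G$-isotypic decomposition of $L^2(G)$, and any closed bi-invariant subspace is forced to be the Hilbert direct sum of some subfamily $\{M_\pi : \pi \in L_W\}$.

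The main obstacle is the density step: carefully setting up $T_\phi$ as a compact self-adjoint operator and verifying that the approximate-identity convergence $T_{\phi_n}f\to f$ in $L^2$ really does hold for all $f\in L^2(G)$. Once that is in place, both the orthogonal direct-sum decomposition and the characterization of bi-invariant subspaces fall out as essentially formal consequences of Schur's lemma and the irreducibility of external tensor products.
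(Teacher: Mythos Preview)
The paper does not prove the Peter--Weyl theorem; it is stated in the preliminaries as a classical background result with no argument given, so there is no ``paper's own proof'' to compare against.

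Your sketch follows the standard textbook route (compact self-adjoint convolution operators plus an approximate identity for density, then Schur's lemma on the $G\times G$-action for the bi-invariant subspace claim) and is correct. Two small points worth tightening if you write it out in full: (i) to go from $T_{\phi_n}f\to f$ for $f\in C(G)$ to density in all of $L^2(G)$, you are implicitly using that $C(G)$ is dense in $L^2(G)$ and that each $T_{\phi_n}f$ already lies in the closed span of the $M_\pi$ (via the spectral decomposition of $T_{\phi_n}$); and (ii) for the ``furthermore'' step, the cleanest way to finish is to note that the orthogonal projection $P_W$ onto a closed bi-invariant $W$ commutes with the $G\times G$-action, so by Schur its restriction to each irreducible $M_\pi$ is a scalar, necessarily $0$ or $1$ since $P_W^2=P_W$, which forces each $M_\pi$ to be either contained in $W$ or orthogonal to it.
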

This provides a very natural basis to study $L^2(G)$ and an analogue of Fourier analysis for $G$. 

\subsection{Fourier Coefficients}
For any function $f\in L^2(G)$ and for any $(\pi,V)\in \widehat{G}$, define the Fourier coefficient  $\widehat{f}(\pi)\in \Lcal(V)$ as 
$$
\widehat{f}(\pi)=\E_g[ f(g)\cdot  \widetilde{\pi}(g^{-1})],
$$
where the expectation is with respect to the Haar measure on $G$. 
The Peter-Weyl theorem implies that the Fourier decomposition of $f$ can be written as
$$
f(g)=\underset{\substack{\pi \in \widehat{G}\\i,j\in [\dim(\pi)]}}{\sum} \widehat{f}(\pi)_{i,j}\cdot \widetilde{\pi}(g)_{j,i}
$$
for all $g\in G$. Define $\|f\|_2^2=\E_g[|f(g)|^2]$. Similarly to classical Boolean function analysis, one can define the Plancharel's theorem and convolutions of functions in $L^2(G)$ which we describe now.
\begin{theorem}[Plancharel's Theorem]
For every $f,h\in L^2(G)$, we have that
$$
\E_g[f(g)\overline{h(g)}]=\underset{\substack{\pi \in \widehat{G}\\i,j\in [\dim(\pi)]}}{\sum} \widehat{f}(\pi)_{i,j} \cdot \overline{\widehat{h}(\pi)_{i,j}}.
$$
In particular, $\E_g[|f(g)|^2]=\underset{\substack{\pi \in \widehat{G}\\i,j\in [\dim(\pi)]}}{\sum} \abs{\widehat{f}(\pi)_{i,j}}^2$.
\end{theorem}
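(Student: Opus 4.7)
The plan is to derive Plancharel's theorem directly from the Peter-Weyl theorem together with Schur's orthogonality relations (\Cref{fact:schurs_orthogonality}). The key observation is that the collection $\{\widetilde{\pi}_{j,i} : \pi\in\widehat{G},\ i,j\in[\dim(\pi)]\}$ is an orthonormal basis of the Hilbert space $L^2(G)$. Indeed, Peter-Weyl gives an orthogonal decomposition $L^2(G) = \bigoplus_{\pi\in\widehat{G}} M_\pi$, and Schur's orthogonality asserts both that $M_\pi \perp M_\sigma$ for $\pi\nsim\sigma$ and that the rescaled matrix entries $\widetilde{\pi}_{i,j}$ form an orthonormal basis of each $M_\pi$. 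So once we verify that the Fourier coefficients $\widehat{f}(\pi)_{i,j}$ are exactly the expansion coefficients of $f$ in this orthonormal basis, Plancharel's theorem will be a direct application of Parseval's identity.

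The first step I would carry out is to rewrite $\widehat{f}(\pi)_{i,j}$ as an inner product with a basis element. Working in a basis in which $\pi(g)$ is unitary for every $g$ (such a basis exists by compactness of $G$ and was assumed earlier in the excerpt), we have $\pi(g^{-1}) = \pi(g)^*$, so $\widetilde{\pi}(g^{-1})_{i,j} = \overline{\widetilde{\pi}(g)_{j,i}}$. Thus
\[
\widehat{f}(\pi)_{i,j} \;=\; \E_g\bigl[f(g)\cdot \widetilde{\pi}(g^{-1})_{i,j}\bigr] \;=\; \E_g\bigl[f(g)\cdot \overline{\widetilde{\pi}(g)_{j,i}}\bigr] \;=\; \bigl\langle f,\ \widetilde{\pi}_{j,i}\bigr\rangle.
\]
In particular, the Fourier decomposition displayed just before the theorem,
\[
f(g) \;=\; \sum_{\pi\in\widehat{G}}\sum_{i,j\in[\dim(\pi)]} \widehat{f}(\pi)_{i,j}\cdot \widetilde{\pi}(g)_{j,i},
\]
is nothing but the expansion of $f$ in the orthonormal basis $\{\widetilde{\pi}_{j,i}\}$.

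The second step is then a one-line application of Parseval's identity. Since $\{\widetilde{\pi}_{j,i}\}$ is orthonormal in $L^2(G)$,
\[
\E_g\bigl[f(g)\overline{h(g)}\bigr] \;=\; \langle f,h\rangle \;=\; \sum_{\pi,i,j} \langle f,\widetilde{\pi}_{j,i}\rangle\,\overline{\langle h,\widetilde{\pi}_{j,i}\rangle} \;=\; \sum_{\pi,i,j} \widehat{f}(\pi)_{i,j}\cdot \overline{\widehat{h}(\pi)_{i,j}},
\]
which is the claimed identity. The second statement, $\E_g[|f(g)|^2] = \sum |\widehat{f}(\pi)_{i,j}|^2$, follows by setting $h=f$.

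There is no real obstacle here: the result is essentially a bookkeeping consequence of Peter-Weyl plus Schur's orthogonality, both of which are quoted directly in the excerpt. The only mildly delicate point worth being careful about is the transpose of indices that appears when converting $\widetilde{\pi}(g^{-1})_{i,j}$ to $\overline{\widetilde{\pi}(g)_{j,i}}$; getting this right is what ensures that $\widehat{f}(\pi)_{i,j}$ matches the coefficient of $\widetilde{\pi}_{j,i}$ (rather than $\widetilde{\pi}_{i,j}$) in the Fourier expansion, so that Parseval is applied to a genuine orthonormal basis.
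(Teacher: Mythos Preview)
Your proof is correct and is essentially the same as the paper's: the paper expands both $f$ and $h$ via the Fourier decomposition and then applies Schur's orthogonality (\Cref{fact:schurs_orthogonality}) to collapse the double sum, which is exactly the Parseval computation you describe. You are just slightly more explicit in naming $\{\widetilde{\pi}_{j,i}\}$ as an orthonormal basis and in verifying the index transpose $\widehat{f}(\pi)_{i,j}=\langle f,\widetilde{\pi}_{j,i}\rangle$, but the substance is identical.
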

This follows from the following calculation.
\begin{align*}
\E_g[f(g)\overline{h(g)}]&=\underset{\substack{\pi,\sigma \in \widehat{G}\\i,j\in [\dim(\pi)]\\k,\ell\in [\dim(\sigma)]}}{\sum} \E_g\sbra{\widehat{f}(\pi)_{i,j}\cdot \widetilde{\pi}(g)_{j,i}\cdot \overline{\widehat{h}(\sigma)_{k,\ell}}\cdot \overline{\widetilde{\sigma}(g)_{\ell,k}}}\\
&=\underset{\substack{\pi,\sigma \in \widehat{G}\\i,j\in [\dim(\pi)]\\k,\ell\in [\dim(\sigma)]}}{\sum} \widehat{f}(\pi)_{i,j}\cdot \overline{\widehat{h}(\sigma)_{k,\ell}}\cdot \E_g[\widetilde{\pi}(g)_{j,i}\cdot \overline{\widetilde{\sigma}(g)_{\ell,k}}]
=\underset{\substack{\pi \in \widehat{G}\\i,j\in [\dim(\pi)]}}{\sum} \widehat{f}(\pi)_{i,j} \cdot \overline{\widehat{h}(\pi)_{i,j}}.
\end{align*}  

\paragraph*{Convolution of Functions}
For every $f_1,f_2\in L^2(G)$, define their convolution  $f_1*f_2\in L^2(G)$~by 
\[
(f_1*f_2)(g)=\E_h\sbra{f_1(gh^{-1})f_2(h) }
\] 
for all $g\in G$. We make use of the following formula for the convolutions of two elements of our orthonormal basis of $L^2(G)$.


\begin{fact} \label{fact:convolution}
Let $(\pi,V),(\sigma,W) \in \widehat{G}$. For all $i,j\in [\dim(V)], k,\ell\in[\dim(W)]$, we have that
\[
\widetilde{\pi}_{i,j}*\widetilde{\sigma}_{k,\ell}=\frac{\indi[j=k, \pi\sim \sigma]}{\sqrt{\dim(V)}}\cdot \widetilde{\pi}_{i,\ell}
\]
\end{fact}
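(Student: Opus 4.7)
The plan is to unwind the definition of convolution and reduce everything to an application of Schur's orthogonality relation (Fact 3.4). Starting from
\[
(\widetilde{\pi}_{i,j} * \widetilde{\sigma}_{k,\ell})(g) = \sqrt{\dim(V)\dim(W)}\cdot\E_h\sbra{\pi(gh^{-1})_{i,j}\,\sigma(h)_{k,\ell}},
\]
I would first split $\pi(gh^{-1}) = \pi(g)\pi(h^{-1})$ using the homomorphism property, expand the matrix product as a sum over an inner index $m\in[\dim(V)]$, and pull $\pi(g)_{i,m}$ outside the expectation since it does not depend on $h$. Since we may (and do) work in a basis in which $\pi(h)$ is unitary, I would then rewrite $\pi(h^{-1})_{m,j} = \overline{\pi(h)_{j,m}}$, so that the integrand becomes $\overline{\pi(h)_{j,m}}\,\sigma(h)_{k,\ell}$.

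The next step is to re-introduce the $\sqrt{\dim(\cdot)}$ normalizations in order to apply Fact 3.4 directly. Writing $\pi(h)_{j,m} = \widetilde{\pi}(h)_{j,m}/\sqrt{\dim(V)}$ and similarly for $\sigma$, Schur's orthogonality gives
\[
\E_h\sbra{\overline{\widetilde{\pi}(h)_{j,m}}\,\widetilde{\sigma}(h)_{k,\ell}} = \indi[\pi\sim\sigma,\,j=k,\,m=\ell],
\]
where I use that on $\widehat{G}$ isomorphism collapses to equality. Thus the entire expectation in $h$ equals $\frac{1}{\sqrt{\dim(V)\dim(W)}}\indi[\pi\sim\sigma,\,j=k,\,m=\ell]$. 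Substituting back and observing that the indicator forces $\dim(V)=\dim(W)$, the factor $\sqrt{\dim(V)\dim(W)}$ cancels and the sum over $m$ collapses to the single term $m=\ell$, leaving
\[
(\widetilde{\pi}_{i,j} * \widetilde{\sigma}_{k,\ell})(g) = \indi[\pi\sim\sigma,\,j=k]\cdot\pi(g)_{i,\ell} = \frac{\indi[\pi\sim\sigma,\,j=k]}{\sqrt{\dim(V)}}\,\widetilde{\pi}_{i,\ell}(g),
\]
which is exactly the claimed formula.

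There is no genuine obstacle here; the proof is essentially bookkeeping. The only places to be careful are (i) keeping track of the complex conjugation when turning $\pi(h^{-1})$ into a conjugated entry of $\pi(h)$, which relies on choosing a basis in which $\pi$ is unitary (guaranteed for finite-dimensional representations of a compact group), and (ii) correctly handling the normalizing factors $\sqrt{\dim(\cdot)}$ when passing between $\pi_{i,j}$ and $\widetilde{\pi}_{i,j}$ so that Fact 3.4 applies verbatim. Once those are managed, all cases follow from a single calculation.
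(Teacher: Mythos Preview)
Your argument is correct and is precisely the standard derivation of this convolution formula from Schur's orthogonality; the paper actually states this result as a \emph{Fact} without proof, so there is no alternative argument to compare against. The bookkeeping points you flag (unitarity of $\pi$ to convert $\pi(h^{-1})_{m,j}$ into $\overline{\pi(h)_{j,m}}$, and tracking the $\sqrt{\dim(\cdot)}$ normalizations) are exactly the ones that matter, and you have handled them correctly.
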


\subsection{Hypercontractivity on \texorpdfstring{$\SU(N)$}{SU(N)}}
Consider the  group $G=\SU(N)$. For $X\in \SU(N)$ and $d\geq 1$, define $V_{\leq d}$ to consist of functions representable as degree $d$  multilinear polynomials in the formal variables $\{\textsf{Re}(X_{ij}),\textsf{Im}(X_{ij}):i,j\in [N]\}$ where $X\in \SU(N)$. For every $d\in \{0,\ldots,N/2-1\}$, define $V_{=d}:= V_{\leq d}\cap (V_{\leq d-1})^{\perp}$ as the ``degree-$d$" part and $V_{\geq N/2}=(V_{<N/2})^{\perp}$.
The space $L^2(G)$ can be decomposed as
$$
\bigoplus_{d=0}^{N/2-1}V_{=d} \bigoplus V_{\ge N/2}.
$$
 Furthermore, since $V_{=d}\subseteq L^2(G)$ is closed  and commutes with the action of $G$ from both sides (i.e., linear combination of degree-$d$ polynomials remains degree-$d$), by the Peter-Weyl theorem we~have
$$
V_{=d}=\bigoplus_{\pi \in L_d}M_{\pi}
$$
for some $L_d\subseteq \widehat{G}$. One important contribution of~\cite{EKLM23} is in proving important properties of this decomposition, which we discuss now. For every $f\in L^2(G)$ and $0\le d\le N/2-1$, let $f^{=d}$ denote the projection of $f$ onto $V_{=d}$, i.e., $f^{=d}=\arg\min\{\langle g,f\rangle: g\in V_{=d}\}$. Let $Q_d=\min_{\pi \in L_d} \{\dim(\pi)\}$ be the minimal dimension of any non-trivial subrepresentation of $V_{=d}$. 
 Let $f^{\ge N/2}$ denote the projection of $f$ onto $V_{\ge N/2}$ and $Q_{\ge N/2}$ be the minimal dimension of any representation of $V_{\ge N/2}$. With this, we are now ready to state the main results we use from~\cite{EKLM23}.

\begin{theorem}[Implied by {\cite[Theorems~3.7, 4.5]{EKLM23}}]
\label{theorem:level_k_inequality} There exists universal constants $c,C>0$ such that the following holds. Let $f:\SU(N)\times\SU(N)\to \{0,1\}$,  $\alpha=\E[f]$ and $d\in \Nbb$ such that $d\le \min\{c\sqrt{N},\log(1/\alpha)/2\}$. Then
\[
\vabs{f^{=d}}_2^2 \le \pbra{C/d}^d   \alpha^2\log^d(1/\alpha). 
\]
\end{theorem}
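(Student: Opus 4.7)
My plan is to derive the stated level-$d$ inequality from a global hypercontractive inequality for $V_{=d}$ on $\SU(N)$ supplied by~\cite{EKLM23}. I anticipate that one of Theorems~3.7 or 4.5 of that paper provides a bound of the form
\[
\vabs{h}_q \le \pbra{K(q-1)}^{d/2}\vabs{h}_2
\]
for every $h\in V_{=d}$ and every $q\ge 2$, valid in the regime $d\le c\sqrt{N}$, where $K>0$ is a universal constant. Because the hypercontractive inequality tensorizes and the degree-$d$ part of $L^2(\SU(N)\times\SU(N))$ decomposes as $\bigoplus_{d_1+d_2=d} V_{=d_1}\otimes V_{=d_2}$, the same estimate (up to an absolute constant in $K$) carries over to the two-factor setting required in the statement.

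Granted this inequality, I would prove the stated bound by H\"older duality. Since $f^{=d}$ is the orthogonal projection of $f$ onto $V_{=d}$,
\[
\vabs{f^{=d}}_2^2 = \abra{f,f^{=d}} \le \vabs{f}_p\cdot \vabs{f^{=d}}_q
\]
for any conjugate pair $1/p+1/q=1$. Because $f$ is an indicator of mean $\alpha$, $\vabs{f}_p = \alpha^{1/p} = \alpha^{1-1/q}$. Plugging in the hypercontractive bound on $\vabs{f^{=d}}_q$ and dividing through by $\vabs{f^{=d}}_2$ yields
\[
\vabs{f^{=d}}_2^2 \;\le\; \alpha^{2-2/q}\cdot \pbra{K(q-1)}^{d}.
\]

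The remaining step is to choose $q$ optimally. I would set $q-1 = 2\log(1/\alpha)/d$, which satisfies $q\ge 2$ under the hypothesis $d\le \log(1/\alpha)/2$. With this choice, $(K(q-1))^d = (2K/d)^d\log^d(1/\alpha)$, and a short calculation using $(2/q)\log(1/\alpha)\le d$ shows $\alpha^{2-2/q}\le e^d\,\alpha^2$. Combining the two estimates gives $\vabs{f^{=d}}_2^2 \le (C/d)^d\,\alpha^2\log^d(1/\alpha)$ with $C=2eK$, matching the theorem. The constraint $d\le c\sqrt{N}$ is inherited solely from the range of validity of the underlying EKLM23 inequality.

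The principal technical obstacle I foresee is matching the precise form of the hypercontractive statement in~\cite{EKLM23} to the one used above. The cited theorems may only assert a $(p,q)$-hypercontractive bound for a fixed pair $(p,q)$ rather than a family parametrized by $q\ge 2$, in which case one would either interpolate between such estimates or use the dual formulation $\vabs{T_\rho f}_2\le\vabs{f}_p$ together with the fact that $T_\rho$ acts as $\rho^d$ on $V_{=d}$; both routes lead to the same optimized bound by a variant of the computation above. A secondary point is that tensorization to $\SU(N)\times\SU(N)$ must respect the product-degree grading on $V_{=d}$, which standard proofs of tensorization handle cleanly but which should be verified once the exact EKLM23 statement is in hand.
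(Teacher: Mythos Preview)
The paper does not give a proof of this theorem at all: it is stated as a black-box import from \cite{EKLM23} (the title line reads ``Implied by {\cite[Theorems~3.7, 4.5]{EKLM23}}'') and the paper then simply uses it downstream. So there is no paper-side argument to compare against beyond the implicit claim that the two cited EKLM23 theorems together yield the inequality.

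Your proposal is the standard and correct derivation of a level-$d$ inequality from a hypercontractive estimate: H\"older duality applied to $\vabs{f^{=d}}_2^2=\abra{f,f^{=d}}$, the hypercontractive bound on $\vabs{f^{=d}}_q$, and the optimizing choice $q-1 \asymp \log(1/\alpha)/d$. Your arithmetic checks out (in particular $(2/q)\log(1/\alpha)\le d$ under $d\le\tfrac12\log(1/\alpha)$, giving $\alpha^{2-2/q}\le e^d\alpha^2$), and this is exactly the mechanism by which \cite{EKLM23} passes from their hypercontractive inequality (one of the cited theorems) to the level-$d$ inequality (the other). The caveats you flag---matching the precise form of the EKLM23 hypercontractivity statement (it may be phrased in terms of a noise operator $T_\rho$ rather than directly as a $2\to q$ bound) and verifying that the inequality passes to $\SU(N)\times\SU(N)$ via tensorization---are the only places where one must consult \cite{EKLM23} directly, and you have identified them accurately.
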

Furthermore, one can bound the dimension of the irreps occuring in $V_{=d}$ as follows.

\begin{theorem}[{\cite[Theorem~3.3]{EKLM23}}]
\label{theorem:minimal_irreps} Let $G=\SU(N)$, $d\leq N/2-1$. Let $c>0$ be a universal~constant. Then
\begin{itemize}
    \item  $Q_d\ge \pbra{\frac{cN}{d}}^d$ if $d< cN/(1+c)$,
    \item $Q_d\ge (1+c)^{cN/(1+c)}$ if $d\ge cN/(1+c)$.
\end{itemize} 
Furthermore, every irrep of $V_{\ge N/2}$ has dimension at least $(1+c)^{cN/(1+c)}$.
\end{theorem}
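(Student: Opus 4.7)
The plan is to prove this dimension bound via the representation theory of $\SU(N)$, combining the classification of irreducible representations by highest weights with Weyl's dimension formula. First I would pin down which irreps actually appear in $V_{=d}$. Irreps of $\SU(N)$ are parametrized by dominant integral highest weights $\lambda=(\lambda_1\geq\cdots\geq\lambda_{N-1}\geq 0)$, and $V_{=d}$ is spanned by matrix coefficients of the mixed tensor products $V^{\otimes p}\otimes (V^{*})^{\otimes q}$ with $p+q=d$, where $V$ is the standard representation. By Schur--Weyl duality, the irreps occurring in $V_{=d}$ correspond (after accounting for the $(1,\ldots,1)$-shift ambiguity of $\SU(N)$) to signed weights $\lambda\in\mathbb{Z}^{N}$ with positive and negative parts satisfying $\sum_i \lambda_i^+ + \sum_i \lambda_i^- = d$.

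Next, I would apply Weyl's dimension formula
\[
\dim V_\lambda \;=\; \prod_{1 \leq i < j \leq N} \frac{\lambda_i - \lambda_j + j - i}{j - i},
\]
and argue that among all weights of a given degree $d$, the minimum is attained by extremal one-part weights: the single row $\lambda=(d,0,\ldots,0)$ giving $\mathrm{Sym}^d V$ of dimension $\binom{N+d-1}{d}$, and the single column $(1,\ldots,1,0,\ldots,0)$ giving $\wedge^d V$ of dimension $\binom{N}{d}$. A monotonicity calculation (track how the Weyl product changes when a single box is moved between rows of the Young diagram, and analogously when a positive box is cancelled against a negative one) shows that any ``unbalancing'' move that concentrates the diagram can only decrease the dimension, pushing us to these extremal shapes. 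Standard binomial estimates then give $\min\{\binom{N+d-1}{d},\binom{N}{d}\}\geq (cN/d)^d$ for an appropriate absolute constant $c>0$, establishing the first bullet of the theorem whenever $d<cN/(1+c)$.

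For $d$ comparable to $N$, the constraint that there are only $N-1$ rows available forces geometric blow-up: already $\binom{N}{d}\geq 2^{\min(d,N-d)}$, so the minimum grows at least exponentially in $N$. Setting the transition at $d_0=cN/(1+c)$, the bound $(cN/d)^d$ evaluated at $d=d_0$ equals $(1+c)^{cN/(1+c)}$; a monotonicity argument over the boundary of the admissible region of weights shows $Q_d$ can only grow for larger $d$, yielding the second bullet. The final assertion about $V_{\geq N/2}$ is then immediate, since every irrep with nontrivial component in $V_{\geq N/2}$ has degree at least $N/2>cN/(1+c)$ for small $c$.

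The main obstacle is the claim that Weyl's formula is minimized by one-row or one-column diagrams among all weights of a fixed degree. Showing this is a delicate combinatorial monotonicity statement: one must check that \emph{every} local perturbation (moving a box between rows, or shifting between the positive and negative parts of $\lambda$) strictly increases the dimension. The cleanest route is to take logarithms and group factors in Weyl's product formula by the affected pair $(i,j)$, reducing the claim to a collection of elementary inequalities between ratios $\frac{\lambda_i-\lambda_j+j-i}{j-i}$; verifying these inequalities in all the relevant configurations (positive/positive, positive/negative, and negative/negative) is the technical heart of the argument.
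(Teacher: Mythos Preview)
The paper does not prove this theorem at all: it is quoted verbatim as \cite[Theorem~3.3]{EKLM23} and used as a black box in the proof of Lemma~\ref{lemma:main_lemma}. There is therefore no ``paper's own proof'' to compare against; any argument you supply is necessarily going beyond what this paper contains.

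That said, your outline is broadly the right shape for how such a bound is established in \cite{EKLM23}: identify the irreps in $V_{=d}$ with highest weights whose ``length'' (sum of positive plus negative parts, after normalizing) equals $d$, and then lower bound the Weyl dimension formula over that set. Two cautions. First, your description of which irreps occur in $V_{=d}$ is slightly loose: $V^{\otimes p}\otimes (V^*)^{\otimes q}$ with $p+q=d$ contains many irreps of strictly lower degree (e.g.\ the trivial representation sits inside $V\otimes V^*$), so one must work with $V_{=d}=V_{\le d}\cap V_{\le d-1}^{\perp}$ and characterize exactly the weights that \emph{first} appear at level $d$; getting this bookkeeping right is what makes the statement nontrivial. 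Second, the assertion that the Weyl product is minimized over degree-$d$ weights at a single row or single column is not obvious and is in fact where the work lies; your proposed ``move one box and check monotonicity'' strategy is plausible but you have not carried it out, and in the mixed-sign (positive/negative part) case the interaction between the two halves of the diagram is genuinely delicate. If you want a self-contained proof you should consult \cite{EKLM23} directly rather than reconstruct it here.
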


\section{Quantum Models \& Quantum Upper Bound}
\label{sec:main_proof_quantum}
We begin by describing the quantum models of communication and then presenting the upper~bounds. 

\subsection{Clean-Qubit Model} \label{sec:clean_qubit_model}
This model was first defined by Klauck and Lim in~\cite{KL19}. A $k$-clean-qubit quantum protocol consists of $k$ qubits in the state $\ket{0}$ and $m$ qubits that are unentangled from these and in the totally mixed state. There is no other private memory for the players. The players communicate as in a standard quantum protocol, that is, they apply unitary operators on the $m+k$ qubits and exchange them back and forth. At the end of the computation, an arbitrary projective measurement (independent of the inputs) is performed. The outcome of the measurement is declared as the output of the protocol. As in standard quantum protocols, the cost of such a protocol is the total number of qubits exchanged, which in this case is the number of rounds times $(m+k)$. Note that the clean and mixed qubits are allowed to have correlations between them, and the clean qubit can act as a control over the mixed qubits.

\subsection{Quantum Simultaneous with Entanglement}
\label{sec:quantum_model}
In this model, in addition to Alice and Bob, there is a third party Charlie. Alice and Bob initially share an entangled state (that is independent of their inputs) and each apply a quantum channel (dependent on their inputs) on their part of the entangled state and send all the qubits to Charlie. Charlie applies a projective measurement (independent of the inputs) is performed. The outcome of the measurement is declared as the output of the protocol. The cost of the protocol is the total number of qubits sent to Charlie.

\subsection{Entangled Fingerprinting Model} 
\label{sec:entangled_fingerprinting}
An entangled-fingerprinting protocol is a simple type of quantum simultaneous protocol with entanglement where essentially, Charlie  just performs a swap test. In more detail, Alice and Bob use their entanglement to prepare a state of the form
\[\frac{1}{\sqrt{2N}}\sum_{i\in [N]}\pbra{\ket{0_A,0_B}\ket{u_i}_A\ket{v_i}_B+\ket{1_A,1_B}\ket{u_i'}_A\ket{v_i'}_B},
\]  
where $\ket{u_i}_A,\ket{u'_i}_A$ are quantum states prepared by Alice and $\ket{v'_i}_B,\ket{v_i}_B$ are states prepared by Bob and the index denotes the player to which the qubit belongs. The players send this entire state to Charlie. Charlie first ``uncomputes'' the second qubit to obtain
\[
\frac{1}{\sqrt{2N}}\sum_{i\in [N]}\pbra{\ket{0}\ket{u_i}\ket{v_i}+\ket{1}\ket{u_i'}\ket{v_i'}},
\]  
and then performs a swap test on this state. In more detail, she swaps the last few  registers controlled on the first qubit to obtain
\[
\frac{1}{\sqrt{2N}}\sum_{i\in [N]}\pbra{\ket{0}\ket{u_i}\ket{v_i}+\ket{1}\ket{v_i'}\ket{u_i'}},
\]  
and then measures the first qubit in the Hadamard basis and returns 1 iff the outcome is $\ket{+}$. 

The motivation for calling this the ``entangled-fingerprinting'' model is as follows. In the standard quantum fingerprinting model of communication~\cite{GKW06}, Alice and Bob send quantum states $\ket{u}$ and $\ket{v}$ respectively to Charlie, who then performs a swap test and returns $1$ with probability $\tfrac{1}{2}\pbra{1+\braket{u|v}^2}$. It was shown~\cite{GKW06} that this model is efficiently simulable in the classical randomized simultaneous model of communication. The entangled-fingerprinting model can be viewed as a variant of the standard quantum fingerprinting model where Alice and Bob are allowed to share entanglement. In contrast to the standard fingerprinting model, we show that the entangled-fingerprinting model can exponentially outperform even interactive classical~communication.

\subsection{Quantum Upper Bound with One Clean Qubit}
\label{sec:clean_qubit_upper_bound} As we mentioned in the introduction, the $\ABCD$ problem was shown to have a simple quantum communication protocol with $O(\log N)$ qubits of communication using one clean qubit~\cite{KL19}. For completeness, we include a proof of this here. 
\begin{theorem}
There is a quantum protocol of cost $O(\log N)$ for the \ABCD~problem in the one-clean-qubit model such that \textsc{yes} instances are accepted with probability at least $0.95$ and the \textsc{no} instances are accepted with probability at most $0.55$. \label{thm:clean_qubit_upper_bound}
\end{theorem}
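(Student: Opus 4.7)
The plan is to use the standard DQC1 trace-estimation idea (a Hadamard test with phase kickback) applied to the product $ABCD$. In detail, initialize the single clean qubit in $|+\rangle$ and the remaining $n=\log N$ qubits in the maximally mixed state $\id/N$, so the initial density matrix is
\[
\rho_0 \;=\; |+\rangle\langle +| \otimes \tfrac{1}{N}\id.
\]
The protocol then applies the controlled unitary $\mathrm{c}\text{-}U := |0\rangle\langle 0|\otimes \id + |1\rangle\langle 1|\otimes U$ for $U=ABCD$ by interleaving four one-way messages: Alice applies $\mathrm{c}\text{-}A$ (which she can do since she knows $A$ explicitly), sends the $n+1$ qubits to Bob; Bob applies $\mathrm{c}\text{-}B$ and returns them; Alice applies $\mathrm{c}\text{-}C$ and sends again; Bob applies $\mathrm{c}\text{-}D$ and sends back. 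Finally the clean qubit is measured in the Hadamard basis, and the output is $1$ iff the measurement yields $|+\rangle$. The total communication is four rounds, each carrying $n+1$ qubits, so the cost is $O(\log N)$.

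For the analysis, a direct calculation shows that after applying $\mathrm{c}\text{-}U$ to $\rho_0$ the resulting state is
\[
\rho \;=\; \tfrac{1}{2N}\bigl(|0\rangle\langle 0|\otimes\id + |0\rangle\langle 1|\otimes U^{\dagger} + |1\rangle\langle 0|\otimes U + |1\rangle\langle 1|\otimes\id\bigr),
\]
so that the probability of observing $|+\rangle$ equals
\[
\Pr[|+\rangle] \;=\; \Tr\!\bigl((|+\rangle\langle +|\otimes \id)\,\rho\bigr) \;=\; \tfrac{1}{2} + \tfrac{1}{2N}\,\mathrm{Re}\,\Tr(ABCD).
\]
Under the \textsc{yes} promise $\Tr(ABCD)\ge 0.9N$ this gives $\Pr[|+\rangle] \ge 0.95$, while under the \textsc{no} promise $\Tr(ABCD)\le 0.1N$ it gives $\Pr[|+\rangle] \le 0.55$, matching the claimed bounds.

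There is no hard step here; the only point that requires a tiny amount of care is that the four controlled unitaries must be applied in the correct order, which is why we use four messages rather than (say) letting Alice apply $\mathrm{c}\text{-}A$ and $\mathrm{c}\text{-}C$ together. Since Alice and Bob each hold their two matrices explicitly, they can implement the controlled operations locally on the registers they currently possess. The entire protocol therefore fits the one-clean-qubit model with a single clean qubit and $n$ mixed qubits, and has cost $4(n+1) = O(\log N)$, concluding the proof.
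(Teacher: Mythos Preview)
Your proof is essentially the same as the paper's: both implement the DQC1 Hadamard test by passing the $n+1$ qubits back and forth four times while applying controlled versions of the four matrices, and both arrive at the acceptance probability $\tfrac12+\tfrac{1}{2N}\,\mathrm{Re}\,\Tr(ABCD)$.

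One small slip worth flagging (ironic given your closing remark about ordering): applying $\mathrm{c}\text{-}A$, then $\mathrm{c}\text{-}B$, then $\mathrm{c}\text{-}C$, then $\mathrm{c}\text{-}D$ implements $\mathrm{c}\text{-}(DCBA)$, not $\mathrm{c}\text{-}(ABCD)$, so the probability you actually compute is $\tfrac12+\tfrac{1}{2N}\,\mathrm{Re}\,\Tr(DCBA)$, which is not $\mathrm{Re}\,\Tr(ABCD)$ in general. The fix is immediate: either reverse the order (Bob starts with $\mathrm{c}\text{-}D$), or do what the paper does and apply $\mathrm{c}\text{-}A^\dagger,\mathrm{c}\text{-}B^\dagger,\mathrm{c}\text{-}C^\dagger,\mathrm{c}\text{-}D^\dagger$ in that order, giving $U=(ABCD)^\dagger$ and hence the same real part of the trace.
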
 
This gap of $0.5$ between \textsc{yes} and \textsc{no} instances can be amplified to an arbitrary constant by repeating the protocol $O(1)$ times (and using $O(1)$ clean qubits). 

\begin{proof}[Proof of~\Cref{thm:clean_qubit_upper_bound}]
Consider the (mixed) state on $\log N+1$ qubits identified by the density matrix $\tfrac{1}{N} \begin{bmatrix} \id & 0 \\ 0 & 0 \end{bmatrix}.$ This state can be viewed as a probability mixture over pure states $\ket{0}\otimes \ket{v}$ where $v\in \Cbb^N$ is a uniformly random unit vector. The protocol starts by Alice first applying the Hadamard operator on the first qubit to produce the uniform mixture over $\frac{1}{\sqrt{2}}\pbra{\ket{0}+\ket{1}}\otimes \ket{v}$ over a random unit vector $v\in \Cbb^N$. Alice applies $A^\dagger$ controlled on the first qubit being $\ket{1}$ and sends the entire state to Bob, who applies $B^\dagger$ controlled on the first qubit being one and sends it to Alice. They similarly apply $C^\dagger$ and $D^\dagger$. This produces the uniform mixture over 
\[ \frac{1}{\sqrt{2}}\ket{0}\otimes \ket{v}+\frac{1}{\sqrt{2}}\ket{1}\otimes D^\dagger C^\dagger B^\dagger A^\dagger \ket{v}\]
for a random unit vector $v\in \Cbb^N$. Alice now measures the first qubit in the Hadamard basis and returns $\textsc{yes}$ if and only if the outcome is $\ket{+}$. The probability of outcome $\ket{+}$ is precisely 
\begin{align*}
&\frac{1}{4}\vabs{ \ket{v}+D^\dagger C^\dagger B^\dagger A^\dagger \ket{v} }_2^2\\
&=\frac{1}{4}\big(\bra{v}+ \bra{v}ABCD\big) \big( \ket{v}+D^\dagger C^\dagger B^\dagger A^\dagger \ket{v}\big)\\
&= \frac{1}{2}+\frac{1}{4}\pbra{\braket{v|ABCD|v}  + \braket{v|(ABCD)^\dagger|v} }.
\end{align*}
If $\Tr(ABCD)\ge 0.9N$, then the average of the above quantity (over a random unit vector $v$) quantity is at least $0.95$ and if $\Tr(ABCD)\le 0.1N$, then the average of this quantity at most~$0.55$. 
\end{proof}

\subsection{Quantum Upper Bound with Entangled Fingerprints}

\label{sec:quantum_upper_bound}
In this section, we show that the $\ABCD$ problem can be solved in the entangled-fingerprinting model in the $\SMP$ communication model.
\begin{theorem}\label{thm:quantum_upper_bound}
    There is a quantum protocol of cost $O(\log N)$ in the entangled-fingerprinting model for the $\ABCD$ problem when Alice and Bob share $\Theta(\log N)$ EPR pairs such that the \textsc{yes} instances are accepted with probability  $\geq 0.95$ and the \textsc{no} instances are accepted with probability~$\leq 0.55$.
\end{theorem}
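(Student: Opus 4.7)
The plan is to construct an explicit entangled-fingerprinting protocol that uses $1+\log N$ shared EPR pairs (one control EPR plus a $\log N$-qubit ``index'' EPR $\tfrac{1}{\sqrt{N}}\sum_{i\in[N]} \ket{i}_A\ket{i}_B$), and in which Alice and Bob each send $1+\log N$ qubits to Charlie, for a total of $O(\log N)$ communication and $\Theta(\log N)$ entanglement as required. Alice will apply a controlled unitary on her half of the state that, conditioned on her control qubit being $\ket{0}$, applies $C^\dagger$ to her index register, and conditioned on $\ket{1}$, applies $A^T$. Bob will analogously apply $B^*$ on $\ket{0}$ and $D$ on $\ket{1}$. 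The resulting joint state is then exactly of the entangled-fingerprint form from~\Cref{sec:entangled_fingerprinting}, with $\ket{u_i}=C^\dagger\ket{i}$, $\ket{v_i}=B^*\ket{i}$, $\ket{u'_i}=A^T\ket{i}$, and $\ket{v'_i}=D\ket{i}$. Charlie then runs the standard fingerprint measurement.

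The analysis rests on the following observation. Writing $\ket{\alpha}=\tfrac{1}{\sqrt{N}}\sum_i \ket{u_i}\ket{v_i}$ and $\ket{\alpha'}=\tfrac{1}{\sqrt{N}}\sum_i \ket{v'_i}\ket{u'_i}$ (note the swap inside $\ket{\alpha'}$, induced by Charlie's controlled-swap step), Charlie's acceptance probability equals $\tfrac{1}{2}\bigl(1+\mathrm{Re}\braket{\alpha|\alpha'}\bigr)$, and expanding the inner product gives
\[
\braket{\alpha|\alpha'}=\frac{1}{N}\sum_{i,j\in[N]}\braket{u_i|v'_j}\braket{v_i|u'_j}.
\]
The feature of the model that makes the construction work is precisely that the swap mixes Alice's and Bob's registers, so each factor inside the sum couples an Alice state to a Bob state. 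This is what allows the \emph{alternating} product $\Tr(ABCD)$ to appear; in contrast, the naive quantity $\sum_{ij}\braket{u_i|u'_j}\braket{v_i|v'_j}$ always reduces to a trace of the form (product of Alice matrices)(product of Bob matrices) and can never equal $\Tr(ABCD)$ since $ABCD$ does not factor through an ``all Alice then all Bob'' grouping even after a cyclic rotation.

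A short direct computation will show that with the chosen conjugation pattern, $\braket{u_i|v'_j}=\bra{i}CD\ket{j}=(CD)_{ij}$ and $\braket{v_i|u'_j}=\bra{i}B^T A^T\ket{j}=(AB)_{ji}$, so that $\sum_{i,j}(CD)_{ij}(AB)_{ji}=\Tr\bigl((CD)(AB)\bigr)=\Tr(ABCD)$ by cyclicity. Hence $\braket{\alpha|\alpha'}=\Tr(ABCD)/N$, which is real under the promise, and Charlie's acceptance probability becomes $\tfrac{1}{2}+\Tr(ABCD)/(2N)$. This is at least $0.95$ on \textsc{yes} instances and at most $0.55$ on \textsc{no} instances, as desired. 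The only nonroutine step is identifying the correct transpose/conjugate pattern $(C^\dagger, A^T, B^*, D)$ that makes the two matrix products chain correctly into $\Tr(ABCD)$; verifying that the controlled Alice/Bob operations are unitary and that $\|\alpha\|=\|\alpha'\|=1$ (which uses only the unitarity of $A,B,C,D$) is then immediate bookkeeping.
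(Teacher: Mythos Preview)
Your proof is correct and follows the same entangled-fingerprinting template as the paper: share one control EPR pair plus a $\log N$-qubit maximally entangled ``index'' state, have Alice and Bob each apply a block-diagonal controlled unitary, send everything to Charlie, and let Charlie do the controlled-swap/Hadamard measurement. The structure and cost analysis are identical.

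The one substantive difference is the choice of local unitaries. The paper has Alice apply $\mathrm{diag}(A,C)$ and Bob apply $\mathrm{diag}(B^\dagger,D^\dagger)$, whereas you have Alice apply $\mathrm{diag}(C^\dagger,A^T)$ and Bob apply $\mathrm{diag}(B^*,D)$. Your transpose/conjugate pattern is tailored so that the ``ricochet'' identity on the maximally entangled state makes the cross term collapse directly to $\langle\alpha|\alpha'\rangle=\Tr(ABCD)/N$; this yields acceptance probability exactly $\tfrac12+\Tr(ABCD)/(2N)$, which meets the theorem's $0.95$/$0.55$ thresholds on the nose. The paper's computation instead ends at $\tfrac14+\tfrac{1}{4N}\,\mathrm{Re}\,\Tr(ABCD)$, giving $0.475$ versus $0.275$; this still separates \textsc{yes} from \textsc{no} with a constant gap, but does not literally match the stated thresholds without a further amplification step. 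In that sense your choice of conjugation pattern is the cleaner one, and your explicit remark that the swap is what allows the alternating product $ABCD$ (rather than an Alice-block times Bob-block) to appear is a nice conceptual point that the paper leaves implicit.
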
 

\begin{proof}[Proof of \Cref{thm:quantum_upper_bound}]
The protocol is as follows. We express the initial state shared by Alice and Bob as follows. 
\begin{align*}
\frac{1}{\sqrt{2N}} \Big(\ket{0_A0_B}\sum_{i=1}^N\ket{i_A,i_B}+\ket{1_A1_B}\sum_{i=1}^N\ket{i_A,i_B}\Big)
\end{align*}
where the subscript $A,B$ denote the registers that Alice and Bob have respectively. Alice applies the map $\begin{bmatrix} A & 0 \\ 0 & C\end{bmatrix}$ which maps $\ket{0_A}\ket{i_A}$ to $\ket{0_A}A\ket{i_A}$ and $\ket{1_A}\ket{i_A}$ to $\ket{1_A}C \ket{i_A}$. Bob applies the map $\begin{bmatrix} B^\dagger & 0 \\ 0 & D^\dagger \end{bmatrix}$ which maps $\ket{0_B}\ket{i_B}$ to $\ket{0_B}B^\dagger\ket{i_B}$ and $\ket{1_B}\ket{i_B}$ to $\ket{1_B}D^\dagger\ket{i_B}$.  This produces the state
\begin{align*}
\frac{1}{\sqrt{2N}} \Big( \ket{0_A,0_B}\sum_{i=1}^N A\ket{i_A}B^\dagger\ket{i_B}+\ket{1_A1_B}\sum_{i=1}^N C \ket{i_A}D^\dagger\ket{i_B}\Big).
\end{align*}
They send the above state  to Charlie. Charlie first uncomputes the second qubit to obtain
\begin{align*}
\frac{1}{\sqrt{2N}} \Big( \ket{0}\sum_{i=1}^N A\ket{i}B^\dagger\ket{i}+\ket{1}\sum_{i=1}^N C \ket{i}D^\dagger\ket{i}\Big).
\end{align*}
Charlie then does a swap test on this state. That is, she applies a controlled swap between the last two registers controlled on the first register and then measures the first qubit in the Hadamard basis. 
If the outcome is $\ket{+}$ then she outputs \textsc{yes}, else outputs \textsc{no}. The probability of outcome $\ket{+}$ is  precisely
\begin{align*}
& \frac{1}{8N}\vabs{ \sum_{i=1}^N A\ket{i}B^\dagger \ket{i} + \sum_{i=1}^N D^\dagger\ket{i}C \ket{i}}_2^2\\
&=\frac{1}{{8N}} \Big( \sum_{j=1}^N \bra{j}A^\dagger\bra{j}B+\sum_{j=1}^N\bra{j}D\bra{j}C^\dagger \Big)  \Big( \sum_{i=1}^N A\ket{i}B^\dagger \ket{i}+\sum_{i=1}^N D^\dagger\ket{i}C\ket{i}\Big)\\
&= \frac{1}{{8N}} \Big(2N+ \sum_{i,j=1}^N \bra{j}A^\dagger D^\dagger\ket{i}\bra{j}BC\ket{i}+\sum_{i,j=1}^N\bra{j}DA\ket{i}\bra{j}C^\dagger B^\dagger \ket{i}\Big)\\
&= \frac{1}{{8N}} \Big(2N+ \sum_{i,j=1}^N \bra{j}A^\dagger D^\dagger \ket{i}\bra{i}C^\dagger B^\dagger\ket{j}+\sum_{i,j=1}^N\bra{j}DA\ket{i}\bra{i}BC\ket{j}\Big)\\
&= \frac{1}{4}+\frac{1}{{8N}} \Big( \Tr\big( (ABCD)^\dagger)+\Tr\big( ABCD\big)\Big)
\end{align*}
If $\Tr(ABCD)\ge 0.9N$, then the above quantity is at least $0.475$ and the protocol is correct with probability at least $0.95$ whereas if $\Tr(ABCD)\le 0.1N$, the above quantity is at most $0.275$ and the protocol returns \textsc{yes} with probability at most $0.55$. \end{proof}

\section{Classical Lower Bound}
\label{sec:classical_lower_bound}
In this section we will prove the following theorem.
\begin{theorem}\label{theorem:main_theorem}
The randomized communication complexity of the $\ABCD$ problem is $\Omega(\sqrt{N}).$
\end{theorem}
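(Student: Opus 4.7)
The plan is to reduce the lower bound to a statement about the distinguishability of two natural distributions and then to invoke the Main Lemma from the proof overview. Consider the \textsc{yes} distribution on $\SU(N)^4$ in which $(A,B,C)$ are independent Haar-random and $D=(ABC)^{-1}$, and the \textsc{no} distribution in which all four are independent Haar-random. Under \textsc{yes}, $\Tr(ABCD)=N$ always. Under \textsc{no}, $\Tr(ABCD)$ is a sum of $N$ independent mean-zero entries of a Haar-random $\SU(N)$ matrix, so by standard concentration on the unitary group $|\Tr(ABCD)|\le 0.1N$ with probability $1-e^{-\Omega(N)}$. Hence both distributions are supported (up to negligible error) on the promise set of the $\ABCD$ problem with opposite labels, so it suffices to show that no deterministic protocol of cost $c<c''\sqrt{N}$ distinguishes them with advantage $\ge 1/3$.

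Next I would invoke the standard rectangle structure: a cost-$c$ protocol partitions $\SU(N)^2\times \SU(N)^2$ into at most $2^c$ combinatorial rectangles $R_i=X_i\times Y_i$. Choose $c''=c'/2$ where $c'$ is the constant from Lemma~\ref{lemma:main_lemma_intro}, and discard every rectangle with $\Haar(X_i)<e^{-c'\sqrt{N}}$ or $\Haar(Y_i)<e^{-c'\sqrt{N}}$; the total mass of discarded rectangles is at most $2\cdot 2^c\cdot e^{-c'\sqrt{N}}=o(1)$. For every remaining rectangle, writing $f=\mathbf{1}_{X_i}$ and $g=\mathbf{1}_{Y_i}$, the \textsc{yes}-mass of $R_i$ is $\E[f(A,C)g(B,(ABC)^{-1})]$ and the \textsc{no}-mass is $\alpha\beta$; by the Main Lemma these differ by at most $0.1\alpha\beta$. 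Summing across rectangles gives a total variation distance $\le 0.1+o(1)<1/3$, contradicting distinguishability.

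The bulk of the work is then the Main Lemma, which I would prove in two stages. Stage one is the Main Claim (Claim~\ref{claim:fourier_expansion_intro}): expand $f,g\in L^2(G\times G)$ into the matrix-coefficient basis $\widetilde{\pi}_{ij}\otimes \widetilde{\sigma}_{k\ell}$ of $\widehat{G\times G}=\widehat{G}\times \widehat{G}$ guaranteed by the Peter--Weyl theorem, and compute the expectation over $(A,B,C)$ with $D=(ABC)^{-1}$. Averaging $B$ Haar-independently over $G$ implements two convolutions and by Fact~\ref{fact:convolution} this forces the two Fourier indices of $f$ and $g$ to agree and produces a $1/\sqrt{\dim(\pi)}$ factor from each convolution; averaging over $A$ and $C$ together with the substitution $D=(ABC)^{-1}$ forces the irrep on the first and second $G$-coordinate of each side to coincide via Schur's orthogonality (Fact~\ref{fact:schurs_orthogonality}). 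The trivial representation contributes exactly $\alpha\beta$ and cancels, leaving the claimed sum
\[
\Delta=\sum_{\emptyset\neq \pi\in \widehat{G}}\frac{1}{\dim(\pi)}\left\langle \widehat{f}(\pi,\pi),\widehat{g}(\pi,\pi)\right\rangle.
\]

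Stage two bounds $\Delta$ using the degree decomposition. Observe that $M_\pi\otimes M_\pi\subseteq V_{=d}^{G}\otimes V_{=d}^G$ lies in the degree-$2d$ subspace of $L^2(G\times G)$ whenever $\pi\in L_d$, so grouping the sum by $d$ lets me replace $\widehat{f}(\pi,\pi)$ and $\widehat{g}(\pi,\pi)$ by the corresponding slices of $f^{=2d}$ and $g^{=2d}$. Applying Cauchy--Schwarz first inside each degree class and then across degrees yields
\[
|\Delta|\le \sum_{d=1}^{N/2-1}\frac{\|f^{=2d}\|_2\,\|g^{=2d}\|_2}{Q_d}+\frac{\|f^{\ge N/2}\|_2\,\|g^{\ge N/2}\|_2}{Q_{\ge N/2}}.
\]
The level-$d$ inequality (Theorem~\ref{theorem:level_k_inequality}) gives $\|f^{=2d}\|_2^2\le (C/(2d))^{2d}\alpha^2\log^{2d}(1/\alpha)$ whenever $2d\le c\sqrt{N}$, and analogously for $g$; the dimension bound (Theorem~\ref{theorem:minimal_irreps}) gives $Q_d\ge (cN/d)^{d}$ in the same regime, so the ratio of the $d$-th summand to $\alpha\beta$ is at most $(C'\log^2(1/\alpha)/(dN))^{d}$, which is $\le 11^{-d}$ once $\log(1/\alpha)\le c'\sqrt{N}$ for sufficiently small $c'$; summing the geometric series gives a contribution bounded by $\alpha\beta/10$. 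The tail $d\ge c\sqrt{N}/2$ and the $V_{\ge N/2}$ piece are absorbed by the exponential dimension bound $(1+c)^{cN/(1+c)}$ together with Parseval's $\|f^{\ge N/2}\|_2\le \sqrt{\alpha}$. The principal obstacle will be the bookkeeping in the Main Claim, namely verifying that $D=(ABC)^{-1}$ genuinely kills every cross-term indexed by $(\pi,\sigma)$ with $\pi\nsim \sigma$ and contributes the precise $1/\dim(\pi)$ factor; this is the content of a careful application of Schur's lemma to the intertwiner produced by the conditional expectation over $D$.
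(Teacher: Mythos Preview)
Your proposal is correct and follows essentially the same route as the paper: reduce via Yao to distinguishing the two natural distributions, partition into rectangles, discard sub-threshold rectangles, and apply the Main Lemma, whose proof proceeds exactly as you sketch via Claim~\ref{claim:fourier_expansion_intro} (the paper streamlines the Schur computation by writing $\E[f(A,C)g(B,(ABC)^{-1})]=\langle\mud,f*g\rangle$) followed by the degree decomposition and the level-$d$/dimension bounds of Theorems~\ref{theorem:level_k_inequality} and~\ref{theorem:minimal_irreps}. The only slip is your remark that $\Tr(ABCD)$ is a sum of $N$ \emph{independent} entries---it is not, but the concentration $|\Tr U|=O(\sqrt{N})$ for Haar $U$ that you need is standard, and the paper in fact omits this verification entirely.
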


The main technical lemma of this paper is the following. 
\begin{lemma}\label{lemma:main_lemma}
Let $f,g:\SU(N)\times \SU(N)\to \{0,1\}$ be such that $\E[f],\E[g]\ge e^{-c'\sqrt{N}}$ for a sufficiently small global constant $c'>0$. Then,
\[
\abs{\E[f(A,C)g(B,(ABC)^{-1})]-\E[f(A,C)g(B,D)]}\le \E[f(A,C)g(B,D)]/30,
\]
where all the expectations are taken with respect to the Haar measure over $\SU(N)$.
\end{lemma}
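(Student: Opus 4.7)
The plan is to execute the three-step strategy outlined in Section 1.3. Setting $G = \SU(N)$, $\alpha = \E[f]$, and $\beta = \E[g]$, the first step is to derive the Fourier identity from Claim 1.4, namely
\[
\Delta := \E\sbra{f(A,C)g(B,(ABC)^{-1})} - \alpha\beta = \sum_{\pi \in \widehat{G},\, \pi \neq 1} \frac{1}{\dim(\pi)} \abra{\widehat{f}(\pi,\pi), \widehat{g}(\pi,\pi)}.
\]
To obtain this I would Peter--Weyl-expand $f$ and $g$ as Fourier series on $G \times G$ and then integrate successively over $A, B, C$. Integrating over $B$ is a convolution in Bob's first register, and the substitution $D = (ABC)^{-1}$ is a convolution in Bob's second register; by the non-abelian convolution formula (Fact 3.3), each of these forces the irrep attached to Alice's register to be isomorphic to Bob's, and Schur orthogonality kills off-diagonal $(\pi,\sigma)$ pairs while producing the weight $1/\dim(\pi)$.

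The second step is to regroup the right-hand side using the degree decomposition $L^2(G) = \bigoplus_{d=0}^{N/2-1} V_{=d} \oplus V_{\geq N/2}$. For each $\pi \in L_d$, the matrix coefficients of $\pi \otimes \pi$ on $G \times G$ are degree-$2d$ polynomials, so
\[
\sum_{\pi \in L_d} \hsnorm{\widehat{f}(\pi,\pi)}^2 \leq \vabs{f^{=2d}}_2^2,
\]
and analogously for $g$. Applying Cauchy--Schwarz inside each inner product and again across $\pi \in L_d$, and extracting the worst-case weight $1/Q_d$, yields
\[
|\Delta| \leq \sum_{d=1}^{N/2-1} \frac{\vabs{f^{=2d}}_2 \cdot \vabs{g^{=2d}}_2}{Q_d} + T,
\]
where $T$ is the analogous contribution coming from $V_{\geq N/2}$. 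The tail is negligible: by Theorem 3.3 every irrep there has dimension at least $(1+c)^{cN/(1+c)}$, while the trivial estimate $\vabs{f^{\geq N/2}}_2^2 \leq \vabs{f}_2^2 \leq \alpha$ bounds the Fourier mass.

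The third step is to insert the quantitative estimates. The level-$k$ inequality (Theorem 3.2) applied with parameter $2d$, together with $\log(1/\alpha),\log(1/\beta) \leq c'\sqrt{N}$, gives
\[
\vabs{f^{=2d}}_2 \leq \alpha \pbra{\frac{Cc'\sqrt{N}}{2d}}^d, \qquad \vabs{g^{=2d}}_2 \leq \beta \pbra{\frac{Cc'\sqrt{N}}{2d}}^d
\]
in the range $d \lesssim c'\sqrt{N}$ where the inequality applies, while for larger $d$ the uniform exponential dimension bound from Theorem 3.3 takes over. Combining with $Q_d \geq (cN/d)^d$ bounds the $d$-th summand by
\[
\alpha\beta \pbra{\frac{Cc'}{2}}^{2d} \frac{N^d}{d^{2d}} \cdot \frac{d^d}{(cN)^d} = \alpha\beta \cdot \frac{(Cc'/2)^{2d}}{c^d \cdot d^d},
\]
which decays rapidly in $d$. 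Summing the resulting geometric series, provided $c'$ is chosen sufficiently small, gives $|\Delta| \leq \alpha\beta/30$.

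The hard part will be Step 1: tracking how the quadruple $(A, B, C, (ABC)^{-1})$ couples the Fourier indices of $f$ and $g$ and verifying that only the diagonal $(\pi,\pi)$ terms survive. The interplay between the two registers of $f$ and $g$ and the cyclic constraint $ABCD = 1$ requires a careful register-by-register application of the non-abelian convolution identity and Schur orthogonality; by contrast, Steps 2 and 3 are essentially routine applications of Cauchy--Schwarz and black-box use of the EKLM inputs.
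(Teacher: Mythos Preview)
Your plan matches the paper's approach exactly: the Fourier identity (paper's Claim~5.3), the degree regrouping with Cauchy--Schwarz, and the combination of the level-$k$ inequality with the dimension lower bounds from Theorem~3.3.

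There is one subtlety you gloss over in Step~3. Theorem~3.2 at parameter $2d$ requires $2d \le \log(1/\alpha)/2$, not merely $2d \lesssim c'\sqrt{N}$; the hypothesis $\alpha \ge e^{-c'\sqrt{N}}$ gives only an \emph{upper} bound on $\log(1/\alpha)$, so when $\alpha$ is moderately large the level-$k$ inequality as stated may apply for no $d \ge 1$ whatsoever. Your sentence ``in the range $d \lesssim c'\sqrt{N}$ where the inequality applies'' conflates the $c\sqrt{N}$ constraint with the $\log(1/\alpha)/2$ constraint. The paper closes this gap with an extra step (Claim~5.4): it partitions $f$ into $K = \lceil e^{2d}\rceil$ indicator pieces each of mass $\alpha/K$, so that each piece now satisfies $d \le \log(K/\alpha)/2$, applies Theorem~3.2 to each piece, and reassembles via Cauchy--Schwarz. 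This yields the uniform bound
\[
\vabs{f^{=d}}_2^2 \le (6C)^d \alpha^2 + \bigl(2C\log(1/\alpha)/d\bigr)^d \alpha^2
\]
valid for all $d$, which is what actually drives your Step-3 summation. With this one addition your plan is complete and coincides with the paper's proof.
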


The proof of the main theorem essentially follows by adding this inequality over all rectangles in the protocol. We first prove the theorem assuming the lemma. 
\begin{proof}[Proof of~\Cref{theorem:main_theorem} from~\Cref{lemma:main_lemma}]
Let $c'$ be the global constant as in~\Cref{lemma:main_lemma}.  We will show that any randomized communication protocol of cost $c'\sqrt{N}/2$  has advantage at most $1/10$ in distinguishing \textsc{yes} and \textsc{no} instances of the $\ABCD$ problem. To this end, consider the \emph{hard} distributions defined as follows. 
\begin{definition}[Hard Distributions] For the $\textsc{yes}$ distribution, Alice gets  $A,C\sim \Haar$, Bob gets  $B,D\sim \Haar$. For the $\textsc{no}$ distribution, Alice gets $A,C\sim \Haar$ and Bob gets $B\sim \Haar$ and $D=(ABC)^{-1}$. 
\end{definition}
Let $\mathcal{P}:\SU(N)^4\to\{0,1\}$ denote the (probabilistic) output of any such randomized communication protocol. Here, we view the output of the protocol as a probabilistic bit in $\{0,1\}$. By Yao's lemma and the triangle inequality, it suffices to bound the difference 
\begin{equation}\label{equation:difference}
\E[\mathcal{P}(A,B,C,(ABC)^{-1})]-\E[\mathcal{P}(A,B,C,D)]\end{equation}
 for \emph{deterministic} protocols. Fix any deterministic protocol $\Pcal:\SU(N)^4\to\{0,1\}$ of cost $c'\sqrt{N}/2$. This defines a partition of the input space into rectangles, where a typical rectangle has  measure $2^{-c'\sqrt{N}/2}$ under the \textsc{no} distribution. Let $\Rcal$ denote the set of rectangles of \textsc{no}-measure at least $2^{-c'\sqrt{N}}$. Observe that \textsc{no}-measure of $\Rcal$ is at least $1-2^{c'\sqrt{N}/2}\cdot 2^{-c'\sqrt{N}}\ge 1- 2^{-c'\sqrt{N}/2}$. We will analyze the contribution of each rectangle in Eq.~\eqref{equation:difference} separately, based on whether it is in $\Rcal$ or not. 
 
 Fix any rectangle $f\times g$ in $\Rcal$ where $f,g:\SU(N)^2\to\{0,1\}$ are the indicator functions of Alice's and Bob's sets. Since the \textsc{no}-measure of the rectangle is precisely $\E[f]\cdot \E[g]$, we have $\E[f],\E[g]>e^{-c'\sqrt{N}}$. We now apply~\Cref{lemma:main_lemma} to conclude that
\begin{equation} \label{equation:inequality}
\abs{\E[f(A,C)g(B,(ABC)^{-1})]-\E[f(A,C)g(B,D)]}\le \E[f(A,C)g(B,D)]/30.
\end{equation}
We already argued that the \textsc{no}-measure of $\Rcal$ is at least $1-2^{-c'\sqrt{N}}$. We now add Eq.~\eqref{equation:inequality} over all rectangles in $\Rcal$ to conclude that the  \textsc{yes}-measure of $\Rcal$ is at least $1-2^{-c'\sqrt{N}/2} -1/30\ge 1-1/20$. In particular, the total \textsc{yes}-measure of rectangles \emph{not in} $\Rcal$ is at most $1/20$. Hence, the total contribution of such rectangles to Eq.~\eqref{equation:difference} is at most $1/20+2^{-c'\sqrt{N}/2}$. We now consider the contribution of rectangles in $\Rcal$. We again use Eq.~\eqref{equation:inequality} and add up over all one-rectangles in $\Rcal$ to conclude that the total contribution of such rectangles to Eq.~\eqref{equation:difference} is at most $1/30$. Overall, we have
\[
\E[\mathcal{P}(A,B,C,(ABC)^{-1})]-\E[\mathcal{P}(A,B,C,D)]\le 1/30+1/20+2^{-c'\sqrt{N}/2} <1/10.
\]
Here, we used the facts that the protocol is constant within a rectangle and the output of the protocol is in $\{0,1\}$. This completes the proof of~\Cref{theorem:main_theorem}.
\end{proof}

It remains to prove~\Cref{lemma:main_lemma} which we do in the next section. 


\subsection{Proof of~\texorpdfstring{\Cref{lemma:main_lemma}}{Lemma 5.2}}\label{sec:main_lemma}
In order to prove this lemma, we need two claims: the first one decomposes  the main expression we need to bound in Lemma~\ref{lemma:main_lemma} in terms of the Fourier coefficients and the next claim is a corollary of Theorem~\ref{theorem:level_k_inequality}.
\begin{claim} 
\label{claim:fourier_expansion} 
Let $G=\SU(N)$ and let $f,g:G\times G\to \{0,1\}$ be the indicator functions. Then, 
$$
\E\sbra{f(A,C)g(B,(ABC)^{-1})}=\sum_{\pi\in \widehat{G}}\frac{1}{\dim(\pi)} \sum_{\substack{i,j\in[\dim(\pi)]\\k,\ell\in [\dim(\pi)]}}  {\widehat{f}(\pi,\pi)_{k,j,\ell,i}}\cdot {\widehat{g}(\pi,\pi)_{i,k,j,\ell}}.
$$
\end{claim}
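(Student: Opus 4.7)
The approach is a direct Fourier-analytic computation on $G\times G$, where $G=\SU(N)$. The key move is that because any irreducible representation $\sigma'$ is a unitary group homomorphism, we can factor $\widetilde{\sigma'}((ABC)^{-1})$ into matrix entries depending separately on $A$, $B$, and $C$. Once this factorization is performed, the expectation over each of these three independent Haar variables collapses by Schur's orthogonality (Fact~\ref{fact:schurs_orthogonality}), forcing all four representations appearing in the expansions of $f$ and $g$ to coincide and pinning the matrix indices together in a prescribed way.

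Concretely, the plan is as follows. First, expand both $f$ and $g$ in the Peter--Weyl basis of $L^2(G\times G)$:
\begin{equation*}
f(A,C)=\sum_{\pi,\sigma\in\widehat G}\sum_{i,j,k,\ell}\widehat f(\pi,\sigma)_{i,j,k,\ell}\,\widetilde\pi(A)_{j,i}\,\widetilde\sigma(C)_{\ell,k},
\end{equation*}
and analogously for $g(B,D)$ with representations $\pi',\sigma'$ and indices $i',j',k',\ell'$. Substituting $D=(ABC)^{-1}$, using that $\sigma'$ is unitary (so that matrix-inversion becomes conjugate-transpose), and applying the homomorphism property yields
\begin{equation*}
\widetilde{\sigma'}\bigl((ABC)^{-1}\bigr)_{\ell',k'}=\frac{1}{\dim(\sigma')}\sum_{m,n}\overline{\widetilde{\sigma'}(A)_{k',m}\,\widetilde{\sigma'}(B)_{m,n}\,\widetilde{\sigma'}(C)_{n,\ell'}},
\end{equation*}
where the prefactor $1/\dim(\sigma')$ arises from reconciling the four normalisation constants $\sqrt{\dim(\sigma')}$ implicit in the definition of $\widetilde{\sigma'}$. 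This is precisely the source of the $1/\dim(\pi)$ appearing in the target identity.

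Second, after substitution, the $(A,B,C)$-dependence of the integrand splits as a product of three independent pieces, one per variable. Taking expectation over each separately and applying Schur's orthogonality three times, e.g.\
\begin{equation*}
\E_A\bigl[\widetilde\pi(A)_{j,i}\,\overline{\widetilde{\sigma'}(A)_{k',m}}\bigr]=\indi[\pi=\sigma',\,j=k',\,i=m],
\end{equation*}
and analogously for $B$ and $C$, forces $\pi=\sigma=\pi'=\sigma'$ and pins each of the primed indices $i',j',k',\ell'$ to one of the unprimed indices $i,j,k,\ell$. Substituting these identifications collapses the quadruple sum over representations into a single sum over $\pi\in\widehat G$, with weight $1/\dim(\pi)$ and the advertised contraction of $\widehat f(\pi,\pi)$ against $\widehat g(\pi,\pi)$.

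The main obstacle I anticipate is the index bookkeeping rather than any conceptual difficulty: one has to verify that the three applications of Schur's orthogonality identify the primed indices with the unprimed ones in exactly the pattern that yields $\widehat f(\pi,\pi)_{k,j,\ell,i}\cdot\widehat g(\pi,\pi)_{i,k,j,\ell}$ as stated, and that the various factors of $\sqrt{\dim(\sigma')}$ collapse to the single $1/\dim(\pi)$ prefactor. This is mechanical but easy to mis-match if one is not careful about the row/column conventions in the Fourier coefficients of functions on $G\times G$.
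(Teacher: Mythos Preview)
Your proposal is correct and the index bookkeeping works out exactly as you anticipate: the three Schur-orthogonality collapses force $\pi=\sigma=\pi'=\sigma'$ and pin $(i',j',k',\ell')=(\ell,i,j,k)$, which after relabelling gives precisely $\widehat f(\pi,\pi)_{k,j,\ell,i}\,\widehat g(\pi,\pi)_{i,k,j,\ell}$ with the $1/\dim(\pi)$ prefactor coming from the normalisation mismatch you identified.

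The paper, however, takes a more structured route. It introduces the \emph{diagonal measure} $\mud$ on $G\times G$ (the pushforward of Haar under $x\mapsto(x,x^{-1})$) and observes that
\[
\E\sbra{f(A,C)\,g(B,(ABC)^{-1})}=\langle \mud,\,f*g\rangle,
\]
where $*$ is convolution on $G\times G$. It then applies Plancherel to this inner product, computes $\widehat{\mud}(\pi,\sigma)$ via a single application of Schur's orthogonality (which already kills $\pi\neq\sigma$ and forces the diagonal index pattern), and separately computes $\widehat{f*g}(\pi,\pi)$ using the convolution formula of Fact~\ref{fact:convolution}. Your approach is the more elementary one: expand everything and integrate variable-by-variable. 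The paper's approach packages the same computation so that the two conceptually distinct phenomena---(i) the constraint $D=(ABC)^{-1}$ lives on the diagonal of $G\times G$, and (ii) the bilinear pairing of $f$ with $g$ is a convolution---are each isolated into its own Fourier calculation before being combined. Neither approach is shorter, but the paper's makes it transparent \emph{why} only the $(\pi,\pi)$ coefficients survive and \emph{why} a single $1/\dim(\pi)$ appears (it is the convolution normalisation, not an artefact of expanding $(ABC)^{-1}$).
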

\begin{proof}
    This follows by expanding $f,g$ in the Fourier basis. To simplify the analysis, we introduce the diagonal probability measure $\mud$. It is the measure obtained by sampling $x\sim \SU(n)$ and outputting $(x,x^{-1})$. More formally it is the push-forward of the Haar measure with respect to the map $x\mapsto (x,x^{-1})$. Observe that
\begin{align*} 
	\E\sbra{f(A,C)g(B,(ABC)^{-1})} &= \E\sbra{f(A,C)g(B,D) \mid AB=(CD)^{-1} } \\
	&= \E\sbra{(f*g)(AB,CD) \mid AB=(CD)^{-1} } \\
	&= \langle \mud,f*g\rangle, 
\end{align*}
where the second equality used the definition of convolution to get $\E[f(A,C)g(B,(ABC)^{-1}]=\E[(f*g)(AB,(AB)^{-1})]$.
We can identify $\widehat{G\times G}$ with the tensor product $\widehat{G}\otimes \widehat{G}$ and accordingly, we will index the irreps of $G\times G$ by $\pi\otimes \sigma$ for $\pi,\sigma\in \widehat{G}$ and we refer to the corresponding Fourier coefficients of a function $h:G\times G\to \Rbb$ by~$\widehat{h}(\pi,\sigma)$. Using Plancharel's theorem, we now rewrite the above as
\begin{align} 
\label{eq:expandingmudabove}
\langle \mud,f*g \rangle &= \sum_{\pi,\sigma\in \widehat{G}} \sum_{\substack{i,j\in[\dim(\pi)]\\k,\ell\in [\dim(\sigma)]}} \overline{\widehat{\mud}(\pi,\sigma)_{i,j,k,\ell}}\cdot \widehat{f*g}(\pi,\sigma)_{i,j,k,\ell}
\end{align}
We now compute the Fourier coefficients of $\mud$ as well as those of $f*g$. We first write out the Fourier coefficients of $\mud$. 
Let $\pi,\sigma\in \widehat{G}$ and let $i,j\in [\dim(\pi)],k,\ell\in [\dim(\sigma)]$. We have 
\begin{align*}
\widehat{\mud}(\pi,\sigma)_{i,j,k,\ell}&=\E\sbra{\widetilde{\pi}_{i,j}(X)\widetilde{\sigma}_{k,\ell}(X^{-1})}=\E\sbra{\widetilde{\pi}_{i,j}(X)\overline{\widetilde{\sigma}_{\ell,k}(X)}},
\end{align*}
where we used $\widehat{\mud}(\pi,\sigma)=\E_{g_1,g_2}[\mud(g_1,g_2)\pi(g_1^{-1})\otimes \sigma(g_2^{-1})]$ in the first equality and $\sigma(X^{-1})=\overline{\sigma(X)^T}$ in the second equality. 
We now use Schur's orthogonality relations in \Cref{fact:schurs_orthogonality} to conclude that the RHS above is $1$ if $\pi=\sigma$, $i=\ell$, $j=k$ and $0$ otherwise, hence we get that
\begin{align}
\label{eq:mudcoeff}
\widehat{\mud}(\pi,\sigma)_{i,j,k,\ell}= \begin{cases} 
      0 & \pi\neq \sigma \\
     \indi[i=\ell,j=k] & \pi=\sigma.
   \end{cases}
\end{align}
So it suffices to consider the terms $\pi=\sigma$ and $i=\ell,j=k$ in Eq.~\eqref{eq:expandingmudabove}.  We next write out the Fourier coefficients of $\widehat{(f*g)}(\pi,\pi)$. 
Similar to the convolution property in~\Cref{fact:convolution}, we have that
\begin{align*}
&\widehat{(f*g)}(\pi,\pi)_{i,j,j,i}\\
&= \E\sbra{(f*g)(X,Y)\widetilde{\pi}_{i,j}(X^{-1})\widetilde{\pi}_{j,i}(Y^{-1})}\\
&=\E\sbra{f(A,C)g(B,D) \widetilde{\pi}_{i,j}(B^{-1}A^{-1})\widetilde{\pi}_{j,i}(D^{-1}C^{-1})}\\
&=\dim(\pi)\cdot \E\sbra{f(A,C)g(B,D) \cdot \pi(B^{-1}A^{-1})_{i,j}\cdot \pi(D^{-1}C^{-1})_{j,i}} \\
 &=\frac{1}{\dim(\pi)}\cdot \E\sbra{f(A,C)g(B,D) \pbra{\sum_{k\in[\dim(\pi)]}\widetilde{\pi}_{i,k}(B^{-1})\widetilde{\pi}_{k,j}(A^{-1})}\pbra{\sum_{\ell\in [\dim(\pi)]}\widetilde{\pi}_{j,\ell}(D^{-1})\widetilde{\pi}_{\ell,i}(C^{-1})}}\\
 &=\frac{1}{\dim(\pi)}\cdot \sum_{k,\ell\in[\dim(\pi)]}\widehat{f}(\pi,\pi)_{k,j,\ell,i}\cdot \widehat{g}(\pi,\pi)_{i,k,j,\ell}.
\end{align*}
Putting together the above equality and Eq.~\eqref{eq:mudcoeff} into Eq.~\eqref{eq:expandingmudabove}, we get that 
\begin{align*} 
\langle \mud,f*g \rangle &= \sum_{\pi,\sigma\in \widehat{G}} \sum_{\substack{i,j\in[\dim(\pi)]\\k,\ell\in \dim(\sigma)}} \overline{\widehat{\mud}(\pi,\sigma)_{i,j,k,\ell}}\cdot {\widehat{f*g}(\pi,\sigma)_{i,j,k,\ell}}\\
 &= \sum_{\pi\in \widehat{G}} \sum_{i,j\in[\dim(\pi)]} {\widehat{f*g}(\pi,\pi)_{i,j,j,i}}= \sum_{\pi\in \widehat{G}} \frac{1}{\dim(\pi)}\sum_{\substack{i,j\in[\dim(\pi)]\\k,\ell\in [\dim(\pi)]}} {\widehat{f}(\pi,\pi)_{k,j,\ell,i}}\cdot {\widehat{g}(\pi,\pi)_{i,k,j,\ell}},
\end{align*}
hence proving the claim statement.
\end{proof}

We next prove the following claim which follows from~\Cref{theorem:level_k_inequality} and holds for all $d\in \Nbb$.
\begin{claim}\label{corollary:level_k_inequality}
 There exists global constants $c,C>0$ such that the following holds. Let  $f:\SU(N)\times\SU(N)\to \{0,1\}$ and $\alpha=\E[f]$ be such that $\alpha \ge e^{-c\sqrt{N}}$. Then, for all $d\in\Nbb$, we have
\[
\vabs{f^{=d}}_2^2 \le (6C)^d \alpha^2 + \big(2C/d\cdot  \log(1/\alpha)\big)^d\alpha^2
\]
\end{claim}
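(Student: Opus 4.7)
The plan is to reduce \Cref{corollary:level_k_inequality} to \Cref{theorem:level_k_inequality} by a case analysis on $d$. Let $c_0, C_0 > 0$ be the universal constants from \Cref{theorem:level_k_inequality}, and choose the constants of the corollary as $c := c_0$ and $C := \max\{C_0, e^2/6\}$; the only purpose of enlarging $C$ is to ensure $6C \ge e^2$, which will be needed below.

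The first step is to handle the ``easy'' range $d \le \min\{c_0\sqrt{N}, \log(1/\alpha)/2\}$, where \Cref{theorem:level_k_inequality} applies directly and gives
\[
\vabs{f^{=d}}_2^2 \le (C_0/d)^d \alpha^2 \log^d(1/\alpha) \le \pbra{2C\log(1/\alpha)/d}^d \alpha^2,
\]
matching the second summand of the claimed bound.

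Outside this range, at least one of $d > c_0\sqrt{N}$ or $d > \log(1/\alpha)/2$ must hold, and here I would fall back on the crude $L^2$ bound. Since $f$ is $\{0,1\}$-valued and $f^{=d}$ is the orthogonal projection of $f$ onto $V_{=d}$, Pythagoras gives $\vabs{f^{=d}}_2^2 \le \vabs{f}_2^2 = \E[f] = \alpha$. So it suffices to verify $\alpha \le (6C)^d \alpha^2$, i.e., $\log(1/\alpha) \le d\log(6C)$. In the first subcase ($d > \log(1/\alpha)/2$), we get $\log(1/\alpha) < 2d \le d\log(6C)$ by our choice $6C\ge e^2$. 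In the second subcase ($d > c_0\sqrt{N}$), the hypothesis $\alpha \ge e^{-c_0\sqrt{N}}$ yields $\log(1/\alpha) \le c_0\sqrt{N} < d \le d\log(6C)$. Either way $\vabs{f^{=d}}_2^2 \le (6C)^d\alpha^2$, which is the first summand.

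Since the two cases are exhaustive, adding them gives the claim. The argument is essentially bookkeeping, so there is no serious obstacle; the only subtlety is enlarging $C$ so that a factor of $e^2$ can be absorbed when upgrading the trivial bound $\vabs{f^{=d}}_2^2 \le \alpha$ into the desired form $(6C)^d\alpha^2$. All the analytic content is already contained in \Cref{theorem:level_k_inequality} of \cite{EKLM23}.
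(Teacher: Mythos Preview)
Your proof is correct, and it takes a cleaner route than the paper's. The only regime where the hypothesis of \Cref{theorem:level_k_inequality} can fail for $d\le c_0\sqrt{N}$ is when $d>\log(1/\alpha)/2$; the paper handles this regime by a partition trick, splitting $f$ into $K=\lceil e^{2d}\rceil$ pieces $f_i$ each of density $\alpha/K$ so that $d\le \log(K/\alpha)/2$ is restored, applying \Cref{theorem:level_k_inequality} to each $f_i$, and recombining via Cauchy--Schwarz. This produces the mixed bound $(6C)^d\alpha^2+(2C\log(1/\alpha)/d)^d\alpha^2$ from the expansion $(3d+\log(1/\alpha))^d\le 2^d((3d)^d+\log^d(1/\alpha))$.

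You bypass this entirely by noting that once $d>\log(1/\alpha)/2$ the Parseval bound $\|f^{=d}\|_2^2\le\alpha$ is already dominated by $(6C)^d\alpha^2$, provided $6C\ge e^2$; enlarging $C$ costs nothing in an existential statement. In effect you show that one of the two summands always suffices by itself, whereas the paper genuinely needs their sum in the intermediate regime. Your argument is shorter and avoids the measurable partitioning of $f$ into equal-mass pieces; the paper's partition trick, on the other hand, does not require adjusting the constant $C$ and would still yield a nontrivial level-$d$ bound (better than Parseval) in the range $\log(1/\alpha)/2<d\le c_0\sqrt{N}$, though that extra strength is not used downstream.
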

\begin{proof}
Fix global constants $C,c$ as in~\Cref{theorem:level_k_inequality}. We show $C'=6C$ and $c'\le c$ that satisfy~\Cref{corollary:level_k_inequality}.  Let $d\le c\sqrt{N}$. Let $K=\lceil e^{2d}\rceil$. Observe that $d\le \log(K/\alpha)/2$ for all $\alpha\in[0,1]$. We express $f:G\to \{0,1\}$ as a sum of indicator functions $\{f_i:G\to \{0,1\}\}_{i\in [K]}$ where each $f_i$ satisfies $\E[f_i]=\alpha/K$. This can be done by partitioning the set corresponding to $f$ into $K$ parts, each of measure $\alpha/K$. We now apply~\Cref{theorem:level_k_inequality} to each $f_i$ of level $d$ to conclude that
\begin{align}
\label{eq:leveldinequality}
\vabs{f^{=d}_i}^2_2 \le \frac{1}{K^2 d^d} C^d \alpha^2 \log^d(K/\alpha). \end{align}
We now add this inequality for all $i\in[K]$. 
We obtain the bound for levels $d\le c\sqrt{N}$ from the following calculation.
\begin{align*} 
\vabs{f^{=d}}^2_2 &\leq K\cdot \sum_{i\in[K]} \|f_i^{=d}\|_2^2\\
&\le K^2\cdot \frac{1}{K^2d^d} C^d \alpha^2 \log^d(K/\alpha)\\
& \le \frac{C^d}{d^d}\alpha^2 \pbra{3d+\log(1/\alpha)}^d\\
&\le \frac{C^d}{d^d}\alpha^2 2^d\pbra{(3d)^d+\log(1/\alpha)^d}\le (6C)^d \alpha^2 + \big(2C/d\cdot  \log(1/\alpha)\big)^d\alpha^2,
\end{align*}
where the first inequality Cauchy-Schwarz and second inequality used Eq.~\eqref{eq:leveldinequality}.  
For levels $d> c\sqrt{N}$, first observe that $\|f^{=d}\|_2^2\le \alpha$ by Parsevals identity. Now choose $c'\le c$ to be a small enough constant  so that when $\alpha \ge e^{-c'\sqrt{N}}$ we have $\alpha\le (6C)^{c\sqrt{N}} \alpha^2$. This implies that we can upper bound $\|f^{=d}\|_2^2\leq \alpha$ by $(6C)^d \alpha^2$ for levels $d> c\sqrt{N}$. This proves the claim statement. 
\end{proof}

Using these two claims, we are now ready to prove  our main Lemma~\ref{lemma:main_lemma}. 
\begin{proof}[Proof of Lemma~\ref{lemma:main_lemma}] Consider the global constant $c$ as in~\Cref{corollary:level_k_inequality} and let $c'$ be sufficiently smaller than $c$.
For $0\le d\le N/2-1$,  let $L_d$ denote the set of all irreps of $V_{=d}$ and let $L_{N/2}$ denote the set of all the irreps of $V_{\ge N/2}$. From~\Cref{claim:fourier_expansion}, we have
\begin{align*}
    \Delta&:=\E[f(A,C)g(B,(ABC)^{-1})]-\E[f(A,C)g(B,D)] \\
    &= \sum_{\emptyset\neq \pi\in \widehat{G}}\frac{1}{\dim(\pi)} \sum_{\substack{i,j\in[\dim(\pi)]\\k,\ell\in [\dim(\pi)]}}  {\widehat{f}(\pi,\pi)_{k,j,\ell,i}}\cdot {\widehat{g}(\pi,\pi)_{i,k,j,\ell}}
\end{align*}
We now break up the contribution of various $\pi$ in the above summation depending on the $L_d$ to which they belong.
\begin{align*}
\Delta &\le \sum_{1\le d\le N/2,}\sum_{\emptyset\neq \pi\in L_d}\frac{1}{\dim(\pi)} \sum_{\substack{i,j\in[\dim(\pi)]\\k,\ell\in [\dim(\pi)]}}  {\widehat{f}(\pi,\pi)_{k,j,\ell,i}}\cdot {\widehat{g}(\pi,\pi)_{i,k,j,\ell}}
\end{align*}
Since $Q_d$ is the minimum dimension of a representation $\pi \in L_d$, we can lower bound $\dim(\pi)\geq Q_d$ above.  
Applying the Cauchy Schwarz on terms $\pi\in L_d$ we get that 
\begin{align}
\label{equation:main_inequality}
\Delta &\le \sum_{1\le d\le N/2,}\frac{1}{Q_d}\sum_{\emptyset\neq \pi\in L_d} \sum_{\substack{i,j\in[\dim(\pi)]\\k,\ell\in [\dim(\pi)]}}  \abs{ {\widehat{f}(\pi,\pi)_{k,j,\ell,i}}}\cdot \abs{{\widehat{g}(\pi,\pi)_{i,k,j,\ell}}}\\
&\le \sum_{1\le d\le N/2,}\frac{1}{Q_d}\sqrt{\sum_{\emptyset\neq \pi\in L_d} \sum_{\substack{i,j\in[\dim(\pi)]\\k,\ell\in [\dim(\pi)]}}  \abs{{\widehat{f}(\pi,\pi)_{k,j,\ell,i}}}^2}\cdot \sqrt{\sum_{\emptyset\neq \pi\in L_d} \sum_{\substack{i,j\in[\dim(\pi)]\\k,\ell\in [\dim(\pi)]}}  \abs{{\widehat{g}(\pi,\pi)_{i,k,j,\ell}}}^2}\\
&\leq \sum_{1\le d\le N/2} \frac{1}{Q_d} \vabs{f^{=2d}}_2\vabs{g^{=2d}}_2
\end{align}
We now analyze the expression above by considering the two cases $d\leq c'\sqrt{N}/2$ or $d> c'\sqrt{N}/2$. For notational convenience, let $\alpha=\E[f]$ and $\beta=\E[g]$.

\paragraph*{Contribution from levels $d\le c'\sqrt{N}/2$.} We will show that the contribution is at most $\alpha\beta/30$ by applying the Level-$k$ Inequality  in \Cref{corollary:level_k_inequality}.  Since $\alpha,\beta\ge e^{-c'\sqrt{N}}$ and $c'\ll c$, we can apply~\Cref{corollary:level_k_inequality} to $f,g$. This implies that for all $d\in \Nbb$, we have 
\[ \vabs{f^{=2d}}_2^2 \le C^{2d}\alpha^2 + \frac{C^{2d}}{(2d)^{2d}} \alpha^2 (c'\sqrt{N})^{2d}\quad\text{ and }\quad \vabs{g^{=2d}}_2^2 \le C^{2d}\beta^2 + \frac{C^{2d}}{(2d)^{2d}}\cdot \beta^2 (c'\sqrt{N})^{2d}. \]
Since $2d\le c'\sqrt{N}$, the second term in the above inequalities dominates. We now use the fact that $Q_d\ge (cN/d)^d$ from~\Cref{theorem:minimal_irreps}. Thus, we have 
\[ \frac{1}{Q_d} \vabs{f^{=2d}}\cdot \vabs{g^{=2d}} \le \frac{d^d}{(cN)^d}\cdot \alpha \beta\cdot 2 C^{2d} \frac{(c'\sqrt{N})^{2d}}{(2d)^{2d}} \le \alpha \beta /(50)^d \]
since $c'$ is a sufficiently small constant. Thus, the contribution from the levels $d\le c'\sqrt{N}$ is at most $\alpha\beta\cdot \sum_d 50^{-d}\le \alpha \beta/30$.

\paragraph*{Contribution from levels $d> c'\sqrt{N}/2$.}  We will show that the contribution is at most $2\alpha\beta/30$. For these levels, we use the trivial bound $\|f^{=d}\|_2 \le \sqrt{\alpha},\|g^{=d}\|_2 \le \sqrt{\beta}$ from Parseval's identity. We will need to handle $d\le cN/10$ and $d>cN/10$ separately. For $d\le cN/10$, ~\Cref{theorem:minimal_irreps} implies that $Q_d \ge (cN/d)^d$ and since $\alpha\beta \ge e^{-c'\sqrt{N}}\gg 10^{-c'\sqrt{N}}$, we have 
\[
\frac{1}{Q_d}\sqrt{\alpha\beta}  \le \sqrt{\alpha \beta} \cdot \pbra{\frac{d}{cN}}^d \le \sqrt{\alpha \beta}\cdot \pbra{\frac{1}{10}}^{c'\sqrt{N}}\le \alpha\beta/(30N).
\]
For $d\ge cN/10$, ~\Cref{theorem:minimal_irreps} implies that\footnote{ This is because for levels $d\le cN/(1+c)$, we have $(cN/d)^d \ge (1+c)^d \ge (1+c)^{cd/(1+c)}$ and for levels $d>cN/(1+c)$ we have $(1+c)^{cN/(1+c)}\ge (1+c)^{cd/(1+c)}$.} $Q_d \ge (1+c)^{cd/(1+c)}\ge e^{-\Omega(N)}$. Since $\sqrt{\alpha\beta} \ge e^{-c'\sqrt{N}}$, we have 
\[\frac{1}{Q_d}\sqrt{\alpha\beta} \le \alpha\beta/(30N). \]
The same calculation works for levels $\ge N/2$ and since $\|f^{\ge N/2}\|_2 \le \sqrt{\alpha},\|g^{\ge N/2}\|_2\le \sqrt{\beta}$, we have
\[\frac{1}{Q_{\ge N/2}} \sqrt{\alpha \beta} \le \alpha\beta/(30 N) \]
Adding this over all possible $d\ge c'\sqrt{N}/2$, it follows that the contribution from levels $d\ge c'\sqrt{N}/2$ is at most $2\alpha \beta/30$. Substituting these in Eq.~\eqref{equation:main_inequality}, we have $\Delta \le \alpha\beta/30.$
\end{proof}







\bibliographystyle{alpha}
\bibliography{refs}
\end{document}